\documentclass{article}
\usepackage{amsmath,amsthm,amssymb,enumitem,cleveref,a4wide,xcolor}

\newtheorem{theorem}{Theorem}[section]
\newtheorem{lemma}[theorem]{Lemma} \newtheorem*{lemma*}{Lemma} \newtheorem{proposition}[theorem]{Proposition} \newtheorem{corollary}[theorem]{Corollary} \newtheorem*{claim}{Claim}
\theoremstyle{definition} \newtheorem{definition}[theorem]{Definition} 
\newenvironment{algorithm}{\medskip\begin{center}\begin{tabular}{|p{0.9\textwidth}|}\hline\smallskip} {\\\hline\end{tabular}\end{center}\medskip}

\DeclareMathOperator{\C}{C} \DeclareMathOperator{\CI}{CI}

\newcommand{\Cs}{\C_{\textnormal{sp}}} 
\newcommand{\eps}{\varepsilon} 
\newcommand{\mcY}{\mathcal Y} \newcommand{\mcN}{\mathcal N} 
  
\newcommand{\iobpp}{\mathsf{ioBPP}}

\title{Key-agreement exists if and only if the ``interactive vs non interactive Kolmogorov problem'' is not in ioBPP: a short proof}
\author{Bruno Bauwens and Bruno Loff} 

\begin{document}
\maketitle
  \begin{abstract}
    Ball, Liu, Mazor and Pass \cite{blmp23} proved that the existence of key-agreement protocols is equivalent to the hardness of a certain problem about interactive Kolmogorov complexity. We generalize the statement and give a short proof of the difficult implication. 
  \end{abstract}

\noindent
This note is self-contained, all definitions are given in \Cref{sec:preliminaries}. The definitions are simpler than those of \cite{blmp23}, yet give a stronger result. The proof of the equivalence has two directions. One direction is almost the same as in \cite{blmp23}, so we include it in \Cref{sec:easy-direction}. The other direction is proven in \Cref{sec:hardDirection}.

\section{Preliminaries}\label{sec:preliminaries}

In a key agreement protocol, Alice and Bob exchange messages via a public channel, with the goal of generating a shared secret: a private, random string known only to them. Any third party who sees their communication should be unable to guess the shared string, except perhaps with negligible probability. We may formalize this as follows.

\begin{itemize}[leftmargin=*, label=--]
  \item 
    There are two randomized algorithms A and B, which are given the same input~$n$. 
    They interact by alternatingly sending a single bit. 
    While an algorithm computes its reply, the other is idle. 
    At some point both algorithms terminate. 
    Then, the {\em transcript} $\pi$ of their interaction is the string of exchanged bits. 

  \item 
    Each algorithm also produces a private output string. Let $x$ and $y$ be these outputs. 
    They are the two versions of the secret key, (and hopefully, $x=y$). 

  \item A {\em protocol} is a mapping from $n$ to a random triple $(\pi, x, y)$ generated by some algorithms A and B in this way. 
    The protocol is {\em $t$-time bounded} for $t:\mathbb N \to\mathbb N $ if there exist such A and B with runtime~$O(t(n))$. 
\end{itemize}

\begin{definition}
  Let $\alpha,\eps:\mathbb N \to[0,1]$. We say that a protocol $n \mapsto (\pi, x, y)$: 
  \begin{itemize}
    \item has \emph{agreement $\alpha$} if for large~$n$: $\Pr[x \mathop{=} y] \ge \alpha(n)$,
    \item is \emph{$\eps$-secure} if for every probabilistic polynomial time algorithm $E$ and large $n$:\footnote{
	In some references, like~\cite{Impagliazzo1989limits,blmp23}, algorithm $E$ does not have $n$ as input. 
	This is meaningless because the following protocol would be a key-agreement one. 
	Alice and Bob have no communication, thus $\pi = $``empty string'', and $x=y=n$. 
	Since $E$ has the empty string as input and is time bounded, with probability 1, its output is bounded by a constant, 
	and therefore wrong for large~$n$. 
      }
      \[
	\Pr[E(0^n,\pi) \mathop= x] \le \eps(n).
      \]
  \end{itemize}
  A {\em key-agreement protocol} is a protocol that is $n^{O(1)}$-time bounded, has agreement $1 - \frac{1}{n^c}$ and is $\frac{1}{n^c}$-secure for all~$c$.
\end{definition}

\noindent
A fundamental open problem in cryptography is whether there exists a key-agreement protocol. 
In~\cite{blmp23}, an equivalent problem is given in terms of interactive Kolmogorov complexity.

\bigskip
\noindent
The {\em Kolmogorov complexity} of $x$ given $y$ is $ \C(x \mathop| y) = \min \{|p| : M(p,y) = x\}$, 
where $M$ is a machine that minimizes the function $\C$ up to an additive constant. 

Let $U$ and $V$ be {\em deterministic} Turing machines with a private input, work and output tapes, and a shared communication tape. 
Let $U(a) {\leftrightarrow} V(b)$ denote the triple $(\pi,x,y)$ obtained as follows. 

Initially, both machines have programs $a$ and~$b$ on their respective input tapes. 
Machines $U$ and $V$ alternatingly calculate a single bit and \textit{communicate} it by writing it on the shared communication tape. 
After both machines halt, $\pi$ is the binary string of communicated bits, and $x,y$ are the contents of $U$ and $V$'s private output tapes. 

\begin{definition}
  The $t$-time-bounded {\em interactive Kolmogorov complexity} of a pair $(\pi, x)$ relative to machines $U,V$ is
  \[
    \CI^t_{U,V}(\pi,x) = \min \Big\{|ab| : (\pi,x,x) = U(a) {\leftrightarrow} V(b)\Big\}, 
  \]
  where both $U$ and $V$ terminate after at most $t(|\pi x|)$ steps. 
\end{definition}

\noindent
The pair $U,V$ is {\em optimal} for this complexity if for every other pair $\widetilde U, \widetilde V$ there exists a constant $c$ satisfying
\[
  \forall \pi, x, t :  \CI^{ct\log t}_{U,V}(\pi, x) \le \CI^{t}_{\widetilde U, \widetilde V}(\pi, x) + c.
\]
We fix such $U,V$ and drop the subscripts from $\CI^t$. 
The special case of interactive complexity for $x = $ ``empty string'' and $t = +\infty$ was first studied in~\cite{onlineComplexity} and shown to exceed $\C(\pi) + \Omega(|\pi|)$ for some $\pi$ in~\cite{Bauwens2014asymmetry}.

\begin{definition}\label{def:iobpp}
  A \textit{promise problem} is a pair $(\mcY, \mcN)$ of sets.
  A problem $(\mcY, \mcN)$ is in $\iobpp$ if there exists a probabilistic polynomial time algorithm $A$ and infinitely many $n$ such that 
  \begin{itemize}
    \item for all $z \in \mcY \cap \{0,1\}^{n}$: $\Pr[A(z) = \textnormal{``outside $\mcN$''}] \ge 2/3$, and
    \item for all $z \in \mcN \cap \{0,1\}^{n}$: $\Pr[A(z) = \textnormal{``outside $\mcY$''}] \ge 2/3$. 
  \end{itemize}
\end{definition}

\noindent
For $\mathsf{io}$ complexity classes, the representation is important in ways that are not for normal complexity classes; 
for example, if all representations have odd length then the problem is trivially in~$\iobpp$. 
So, for concreteness, we will represent a pair $(\pi,x)$ of Boolean strings by $0\pi_1 \cdots 0\pi_{|\pi|}1x$, thus $|(\pi,x)| = |\pi\pi x| + 1$.

All results also hold with a more robust variant of~$\iobpp$. 
We say that a problem is in {\em interval-$\iobpp$} if there exists a function $f(n) \le o(n)$ such that the 2 requirements of definition~\ref{def:iobpp} 
hold for infinitely many $n$ and all $z$ with $f(n) \le |z| \le n$. 
Being interval-$\iobpp$ is invariant under linear time transformations of the representation.

\bigskip
\noindent
The following is a simpler version of the main result of~\cite{blmp23}.
\begin{theorem}\label{th:main}
  Let $c>0$, $e>c+3$, $t(n) \in \omega(n \log n)$ with $t(n) \le n^{O(1)}$ and 
  \begin{align*}
    \mcY^t_c &= \{(\pi, x) : \CI^t(\pi, x) \le \C(\pi) + c \log |\pi x|\} \\
    \mcN_e &= \{(\pi, x) : \C(\pi, x) \ge \C(\pi) + e \log |\pi x|\}. 
  \end{align*}
  A key-agreement protocol exists if and only if the promise problem~$(\mcY^t_c, \mcN_e)$ is not in~$\iobpp$. 
\end{theorem}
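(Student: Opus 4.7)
The plan is to prove the two directions of the equivalence separately; the interesting one is the hard direction.

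For the easy direction (key-agreement exists $\Rightarrow$ the problem is not in $\iobpp$), I would argue by contradiction. Suppose both a key-agreement protocol $\Pi$ and an $\iobpp$ algorithm $A$ for $(\mcY^t_c, \mcN_e)$ exist; I would use $A$ to mount an eavesdropping attack. The starting observation is that typical outputs $(\pi, x)$ of $\Pi$ lie in $\mcY^t_c$: given any description of $\pi$ of length $\C(\pi)$, Alice and Bob can jointly simulate $\Pi$ conditioned on producing transcript $\pi$, which gives an interactive description of total size $\C(\pi) + O(\log n)$. Meanwhile, for any fixed $\pi$, a uniformly random $x' \in \{0,1\}^{|x|}$ satisfies $(\pi, x') \in \mcN_e$ with overwhelming probability by a counting argument. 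Hence $A(\pi, \cdot)$ is a polynomial-time distinguisher between the true secret and a uniformly random string, and a bit-by-bit guess-and-check strategy turns it into an eavesdropper $E$ with $\Pr[E(0^n, \pi) = x]$ non-negligible, breaking security.

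For the hard direction (problem not in $\iobpp$ $\Rightarrow$ key-agreement exists), I would prove the contrapositive: assuming no key-agreement protocol exists, construct an $\iobpp$ algorithm $A$. The natural candidate protocol samples a uniformly random pair $(a,b)$ of total length tuned to $\C(\pi) + c\log n$ and runs $U(a) \leftrightarrow V(b)$. By hypothesis, this protocol is broken by some efficient eavesdropper $E$ with $\Pr[E(0^n, \pi) = x] \geq 1/n^d$ on infinitely many input lengths. The algorithm $A(\pi, x)$ will run $E(\pi)$ polynomially many times and output ``outside $\mcN$'' if $E(\pi) = x$ substantially often, else ``outside $\mcY$''. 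Correctness splits into two cases: if $(\pi,x) \in \mcN_e$, then $\C(\pi,x) \geq \C(\pi) + e \log|\pi x|$, so no efficient $E$ can predict $x$ from $\pi$ with probability above roughly $n^{-(e - O(1))}$, since a lucky run of $E$ would yield a short description of $(\pi, x)$ contradicting its Kolmogorov lower bound; if $(\pi, x) \in \mcY^t_c$, it lies in the support of the candidate protocol and should be predictable by~$E$. The gap $e > c + 3$ is what lets the two thresholds cleanly separate.

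The hard part of the hard direction is arguing that an eavesdropper $E$ which breaks the candidate KA \emph{on its output distribution} also predicts $x$ on \emph{every worst-case} $(\pi, x) \in \mcY^t_c$. Generic security breaks yield only average-case predictors, yet $\iobpp$ correctness requires handling every pair in $\mcY^t_c \cap \{0,1\}^n$. Overcoming this likely requires either a clever choice of candidate protocol in which every pair of $\mcY^t_c$ has non-negligible weight, or a reduction that, given a worst-case $(\pi, x)$, places it inside an average-case instance by enumerating short $(a, b)$ consistent with $\pi$ and applying $E$ to each. Careful accounting of the logarithmic slack between $\C$, $\CI^t$, and $E$'s success probability is needed, and is presumably where the precise condition $e > c + 3$ enters; the mild hypothesis $t(n) \in \omega(n \log n)$ should give enough room to absorb the overhead from the optimality constant of $U, V$.
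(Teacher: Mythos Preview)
Your labeling of the two directions is inverted relative to the paper: what you call the ``easy direction'' (key-agreement $\Rightarrow$ not in $\iobpp$) is in fact the hard one, and your sketch of it contains the main gap.

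The claim that typical outputs $(\pi, x)$ of a key-agreement protocol lie in $\mcY^t_c$ via ``jointly simulating $\Pi$ conditioned on producing transcript $\pi$'' does not work. First, $\CI^t$ is time-bounded, so a description of $\pi$ of length $\C(\pi)$ is useless here: it need not run in time $t$. Second, and more fundamentally, the key $x$ is \emph{not} a function of $\pi$; it depends on Alice's private randomness $r_A$, and different $r_A$ consistent with $\pi$ can yield different keys. The only obvious way to witness $(\pi, x, x)$ interactively is to hand Alice $r_A$ and Bob $r_B$ as programs, which shows $\CI^t(\pi,x) \le |r_A r_B| + O(\log n)$. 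For this to give $\CI^t(\pi,x) \le \C(\pi) + c\log n$ you would need $\C(\pi) \gtrsim |r_A r_B|$, but for a general protocol the map $(r_A, r_B) \mapsto \pi$ can be highly non-injective, so $\C(\pi) \ll |r_A r_B|$ is entirely possible. The paper handles exactly this obstacle: it first observes the argument goes through when $(r_A, r_B) \mapsto \pi$ is a bijection (``DH-like'' protocols, \Cref{lem:DHlikeHaveLeakage}), and then, for the general case, has Eve append random hash seeds $h,g$ and \emph{guessed} hash values $w,v$ to the transcript so that the augmented protocol becomes approximately DH-like. Guessing the correct hash lengths costs a $1/\rho^2$ factor in the leakage, and \Cref{lem:randomInverseTwoHashes} (a left-over-hash argument) shows that replacing the true hash values by random ones still leaves the inverse-hash preimages incompressible. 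This hash trick is the technical heart of the direction; your sketch contains nothing that plays its role.

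For your ``hard direction'' (the paper's easy one, \Cref{prop:easyDirection}), the outline is close but two points need fixing. You cannot ``tune the total length to $\C(\pi) + c\log n$'' since $\C(\pi)$ is uncomputable; instead the paper samples program lengths uniformly in $[2n]$ (a Levin-search protocol) and appends an equality-check hash. More importantly, an eavesdropper with merely non-negligible success $1/n^d$ is not enough to get worst-case correctness on $\mcY^t_c$: the paper invokes Holenstein's amplification (\Cref{th:Holenstein}), which under the no-key-agreement assumption upgrades the adversary to one succeeding with probability $1 - n^{-\beta}$. With near-perfect Eve in hand, a counting argument shows the set $S_{n,\ell}$ of pairs on which Eve fails is so small that every such pair already has $\C(\pi,x)$ too low to belong to $\mcY^t_c$. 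Your proposed alternative of ``enumerating short $(a,b)$ consistent with $\pi$'' is not needed and would not obviously run in polynomial time.
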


We make several useful remarks about this theorem in \Cref{sec:remarks}, where we also prove that it implies theorem 3.4 in~\cite{blmp23}.

\medskip
\noindent
{\em More remarks.}
\begin{itemize}[leftmargin=*,nosep,label=--]
  \item 
   The theorem also holds if we replace the set $\mcN_e$ by $\{(\pi,x):\C(x|\pi) \ge e \log |\pi x|\}$. 
 \item 
   It also holds if we use interval-$\iobpp$ instead of~$\iobpp$. 
\item 
  We will prove the theorem for the special case that $c > 3$ and $e > c + 6$. 
  The proof with better parameters is outlined in the footnotes.\footnote{ 
    A variant with prefix-free complexity with $|\pi|$ in the condition holds for all~$0<c<e$. 
    }
\end{itemize}

\medskip
\noindent
The proof that nonexistence of key-agreement implies that $(\mcY^t_c,\mcN_e)$ is in $\iobpp$, is the same as in \cite{blmp23}, so we include it in \Cref{sec:easy-direction}. 
We now prove the other direction.

\section{Algorithms for $(\mcY^t_c, \mcN_e)$ can break key-agreement}\label{sec:hardDirection}

\begin{proposition}\label{prop:hardDirection}
  Let $c > 3$ and $t(n) \in \omega(n\log n)$. 
  If $(\mcY^t_c, \mcN_e)$ is in $\iobpp$, then there exists no key-agreement protocol.  
\end{proposition}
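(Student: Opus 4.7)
I prove the contrapositive: assume both a key-agreement protocol $P_n \mapsto (\pi,x,y)$ and a PPT witness $A$ that $(\mcY^t_c, \mcN_e) \in \iobpp$; I will construct a PPT adversary $E$ that, on infinitely many $n$, predicts the secret from the transcript with probability $\ge 1/\poly(n)$, contradicting the $1/n^c$-security of $P$ for every constant $c$.

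\emph{Step 1 (samples from the protocol are typically in $\mcY^t_c$).} My first goal is to show that a constant fraction of samples $(\pi,x,y) \sim P_n$ with $x=y$ satisfy $(\pi,x) \in \mcY^t_c$. The key ingredient is an interactive analogue of Levin's coding theorem: since the pair of randomized algorithms realizing the protocol produces the triple $(\pi,x,x)$ with probability $p = \Pr[\pi,x,y{=}x]$ in time $O(t(|\pi x|))$, one expects deterministic programs $a,b$ with $|ab| \le -\log p + O(\log|\pi x|)$ satisfying $U(a) \leftrightarrow V(b) = (\pi,x,x)$ in time $O(t\log t)$. Factor $-\log p = -\log P(\pi) - \log P(x,y{=}x\mid\pi)$. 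For ``stochastic'' transcripts one has $-\log P(\pi) \le \C(\pi) + O(\log|\pi|)$, and by agreement the conditional $P(x\mid\pi)$ is concentrated on its mode $x^*$ with $P(x^*,y{=}x^*\mid\pi) \ge \Omega(1)$. Combining these inequalities places $(\pi,x^*)$ in $\mcY^t_c$ with constant probability over the protocol samples.

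\emph{Step 2 (disjointness and polynomial density).} Non-interactively simulating any pair $(a,b)$ achieves $\C(\pi,x) \le \CI^t(\pi,x) + O(1)$; combined with $e > c+3$ this shows that $\mcY^t_c \cap \mcN_e = \emptyset$. In particular $(\pi,x) \in \mcY^t_c$ forces $\C(x\mid\pi) \le (c+O(1))\log|\pi x|$, so for each $\pi$ at most $|\pi x|^{c+O(1)}$ values of $x$ satisfy $(\pi,x) \in \mcY^t_c$.

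\emph{Step 3 (the adversary $E$).} Using the $\iobpp$ algorithm $A$, define $E$ on input $(0^n,\pi)$ as follows: $E$ generates a polynomial-sized pool of candidate secrets $x'$ --- for instance by running many fresh copies of Alice's algorithm against Bob's messages in $\pi$ and of Bob's algorithm against Alice's messages --- then for each candidate runs several independent executions of $A(\pi,x')$ and retains those consistently declared ``outside $\mcN$''; finally $E$ outputs a uniformly random element of the retained list. On the infinitely many $n$ at which $A$ is correct on $\mcY^t_c \cup \mcN_e$, Step 1 ensures that the true $x^*$ is declared ``outside $\mcN$'' with constant probability, while Step 2 bounds the retained list by $\poly(n)$, so $\Pr[E(\pi) = x^*] \ge 1/\poly(n)$, contradicting $1/n^c$-security for all $c$.

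\emph{Main obstacle.} The principal technical difficulty is Step 1: establishing a tight interactive coding theorem at the right time bound, so that $\CI^t(\pi,x)$ tracks $\C(\pi)$ rather than the poly-size combined randomness of the two parties. A secondary subtlety arises in Step 3, where one must argue that the polynomial-sized candidate pool actually contains $x^*$ with noticeable probability --- this likely requires invoking both parties' simulations together with agreement, since a one-sided fresh simulation generally reproduces $x^*$ only with exponentially small probability; the other natural route, rejection-sampling a transcript-consistent internal randomness, is infeasible because $P(\pi)$ is typically exponentially small.
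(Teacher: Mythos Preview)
Your proposal has a genuine gap, and you have correctly identified it yourself: Step~1 is the whole difficulty, and the ``interactive coding theorem'' you invoke does not hold in the form you need. Write $P(\pi) = |A_\pi|\cdot|B_\pi|\,/\,2^{|r_A|+|r_B|}$, where $A_\pi,B_\pi$ are the sets of Alice's and Bob's randomness compatible with~$\pi$. A non-interactive coding theorem would give $\C(\pi) \lesssim -\log P(\pi)$, and you want the same for $\CI^t(\pi,x)$. But a witnessing pair $(a,b)$ for $\CI^t$ must be fixed \emph{before} the interaction: Alice's program $a$ cannot be ``an index into $A_\pi$'', because she does not know $\pi$ until after Bob has spoken, and the same $a$ must produce consistent replies for every transcript that could arise. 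In general the shortest such $a$ has length about $|r_A|$, not $|r_A|-\log|A_\pi|$, so $\CI^t(\pi,x)$ is only bounded by the total randomness $|r_Ar_B|$, which may exceed $\C(\pi)$ by a polynomial. This is exactly why the DH-like (bijective $(r_A,r_B)\mapsto\pi$) case is easy and the general case is not.

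The paper's resolution is not a coding theorem but a transformation of the protocol: append to the transcript hash values $h(r_A),g(r_B)$ of the parties' randomness, with hash lengths $\approx\log|A_\pi|,\log|B_\pi|$. This makes the augmented map approximately bijective, so $\CI^t(\tilde\pi,x)\lesssim |hagb|\lesssim \C(\tilde\pi)$ via a leftover-hash argument (the paper's \Cref{lem:randomInverseTwoHashes}). The adversary does not know $|A_\pi|,|B_\pi|$, so it guesses the hash lengths (cost $1/\rho^2$) and replaces the hash values by fresh random strings (which look correct, again by leftover hash). This turns the decider into a \emph{distinguisher} between $(\pi,x)$ and $(\pi,U)$ with advantage $\Omega(1/\rho^2)$; Goldreich--Levin (the paper's \Cref{cor:GLka}) then converts distinguishing into prediction. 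Your Step~3 difficulty --- producing a candidate pool containing $x^*$ --- is thereby sidestepped entirely: the paper never tries to enumerate candidates, and your concern that fresh one-sided simulation need not reproduce $x^*$ is exactly why that route is avoided.
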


\noindent
Using the Goldreich-Levin theorem it is possible to prove that, if a secure key-agreement protocol exists, i.e. where Eve cannot guess the entire key, then a \emph{strongly secure} key-agreement protocol exists, where Eve cannot even distinguish the agreed-upon key from a uniformly random string.

\begin{definition}
  Let $\eps:\mathbb N \to[0,1]$. We say that a protocol $n \mapsto (\pi, x, y)$ \emph{has leakage $\eps$} if there exists a probabilistic polynomial time algorithm $E$ and infinitely many $n$ such that 
  \[
    \Big| \Pr[E(\pi, x)] - \Pr[E(\pi, U)] \Big| > \eps(n), 
  \]
  where $U$ is a random string in $\{0,1\}^{|x|}$. 
\end{definition}

\noindent
In the above definition, conditional to $|U| = |x|$, string $U$ is independent of the protocol. 
For technical reasons, we consider leakage in protocols that satisfy additional assumptions. 

\begin{definition}
  We say that a protocol $n \mapsto (\pi,x,y)$ is {\em standard} if  $|(\pi,x)| = n$, it has runtime~$O(n)$, 
    it has $(1-1/n^{\omega(1)})$-agreement, and the amount of randomness used by Alice and by Bob only depends on~$n$. 
\end{definition}

\noindent
The following is proven in~\Cref{sec:proof-GL-corollary}.

\begin{corollary}[of \cite{gol-lev:c:one-way,yao:c:oneway}]\label{cor:GLka}
  Suppose that there exist positive $\epsilon,b,d$ such that every standard protocol with $|x| \ge d\log |\pi x|$ and which uses at most $n^{\epsilon}$ bits of randomness, has leakage~$1/n^{b}$. 
  Then key-agreement does not exist. 
\end{corollary}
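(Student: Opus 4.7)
The plan is to argue by contrapositive: assuming a key-agreement protocol $P_0$ exists, I will construct, for any prescribed positive $\epsilon,b,d$, a standard protocol whose key $z$ satisfies $|z| \ge d\log|\pi z|$, uses at most $n^{\epsilon}$ bits of randomness, and has leakage $o(1/n^b)$. The main tools are a rescaling of $P_0$ to tame its randomness use, followed by the Goldreich--Levin hardcore-bit theorem together with Yao's next-bit argument, to turn a hard-to-guess key into many jointly pseudorandom bits.

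Let $P_0$ run in time $m^k$ on parameter $m$, with agreement $1 - m^{-\omega(1)}$ and $m^{-\omega(1)}$-security. I would construct the new protocol $P$ on input $1^n$ as follows. First, Alice and Bob run $P_0$ with parameter $m := \lfloor n^{\epsilon/(2k)}\rfloor$, obtaining a transcript $\pi_0$ and (almost-surely equal) private outputs $x_0,y_0$ of length at most $m^k \le n^{\epsilon/2}$. Second, Alice samples $\ell := \lceil d\log n\rceil + 1$ independent uniform strings $r_1,\ldots,r_\ell \in \{0,1\}^{|x_0|}$, sends them to Bob over the public channel, and both parties output the $\ell$-bit key $z = \bigl(\langle x_0,r_i\rangle\bigr)_{i \le \ell}$. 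Finally, pad the transcript so that $|(\pi,z)| = n$ and reparametrize the runtime to $O(n)$.

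The protocol $P$ is standard: the runtime is $O(n)$, the per-party randomness use depends only on $n$ and is at most $m^k + \ell|x_0| = O(n^{\epsilon/2}\log n) \le n^{\epsilon}$ for large $n$, and the agreement is $1 - n^{-\omega(1)}$, inherited from $P_0$. Moreover $|z| = \ell > d\log|\pi z|$ by construction. For the leakage bound, suppose some PPT distinguisher $E$ separates $(\pi, z)$ from $(\pi, U_\ell)$ with advantage $\delta$. The standard combination of Yao's next-bit lemma and the Goldreich--Levin list-decoding reduction then yields a PPT algorithm that outputs a $\poly(n)$-size list containing $x_0$, with probability polynomial in $\delta/(\ell|x_0|)$. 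From such a list $x_0$ can be selected at a further polynomial loss, contradicting the $m^{-\omega(1)} = n^{-\omega(1)}$-security of $P_0$ unless $\delta$ is itself $n^{-\omega(1)}$. In particular $\delta < 1/n^b$ for large $n$, as required.

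The main delicate step is the parameter balancing: the $n^{\epsilon}$ randomness budget forces running $P_0$ on a polynomially smaller input, which shrinks $|x_0|$ to $n^{\epsilon/2}$. This is tolerable because we only need $O(\log n)$ hardcore bits and the security of $P_0$ remains super-polynomial in $n$ after rescaling, so the hybrid and Goldreich--Levin losses are absorbed with room to spare. The remaining standardness conditions---fixed-$n$ randomness budget, linear runtime, canonical transcript length---reduce to routine padding.
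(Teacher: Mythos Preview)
Your proof is correct and follows essentially the same route as the paper's: rescale the assumed key-agreement protocol so that its randomness fits the $n^{\epsilon}$ budget, extract $\Theta(\log n)$ Goldreich--Levin hardcore bits as the new key, pad to make the protocol standard, and then invoke Yao's hybrid argument plus Goldreich--Levin to turn any distinguisher into a predictor for the original key, contradicting security. The only cosmetic slip is the phrase ``has leakage $o(1/n^b)$'': in the paper's terminology ``has leakage $\eps$'' is an existential distinguishability statement, so what you actually establish (and need) is that the protocol does \emph{not} have leakage $1/n^b$.
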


\noindent
We will prove that an algorithm for the $(\mcY^t_c, \mcN_e)$ problem can be used to break all protocols in the above sense, that 
it can distinguish the agreed key from a random string. 
This is easy to show for a specific type of protocol. 
Recall that a protocol is a randomized algorithm, hence, it induces a map from the chosen randomness by Alice and Bob to transcripts. 
In~\cite{blmp23} a protocol is called Diffie-Hellman-like if this map is bijective.

\begin{definition}
  A protocol is {\em DH-like} if for every $n$ the map $(r_A, r_B) \mapsto \pi$ is bijective. 
\end{definition}

\begin{lemma}\label{lem:DHlikeHaveLeakage}
  Let $c>3$, $d>e$ and $t(n) \ge \omega(n\log n)$. If $(\mcY^t_c, \mcN_e)$ is in $\iobpp$, then every DH-like standard protocol with $|x| \ge d\log |\pi x|$ has leakage~$\Omega(1)$. 
\end{lemma}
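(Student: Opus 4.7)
The plan is to use the $\iobpp$-algorithm $A$ for $(\mcY^t_c, \mcN_e)$ directly as Eve's distinguisher: on input $(\pi, z)$ the procedure $E(\pi, z)$ runs $A(\pi, z)$ and outputs $1$ iff $A$ answers ``outside~$\mcN$''. To obtain leakage~$\Omega(1)$ it is enough to show that on a random protocol execution $(\pi, x) \in \mcY^t_c$ with probability $1-o(1)$, while for $U$ uniform in $\{0,1\}^{|x|}$ independent of $\pi$ the pair $(\pi, U) \in \mcN_e$ with probability $1-o(1)$. Combined with the $2/3$--$1/3$ gap promised by $A$ on the infinitely many good lengths $n$ (which match the protocol lengths since standardness forces $|(\pi,x)|=n$), this yields $|\Pr[E(\pi,x){=}1] - \Pr[E(\pi,U){=}1]| \ge 1/3 - o(1)$.

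For the ``yes''-side bound I would spend the DH-like hypothesis. Bijectivity of $(r_A, r_B) \mapsto \pi$ together with standardness puts $\pi$ uniformly on a set of size $2^{|r_A|+|r_B|}$, so counting gives $\C(\pi) \ge |r_A|+|r_B| - O(\log n)$ except with probability $1/\poly(n)$. On the other hand, a machine pair $(\widetilde U, \widetilde V)$ that simulates Alice and Bob on their true randomness $(r_A, r_B)$ outputs $(\pi, x, x)$ in time $O(n)$ whenever $x=y$, which occurs with probability $1-1/n^{\omega(1)}$. Since $t(n) = \omega(n\log n)$, universality of $(U, V)$ therefore yields $\CI^t(\pi, x) \le |r_A|+|r_B| + O(\log n)$. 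Combining these bounds, and using the pairing encoding which forces $\log|\pi x| = \log n - O(1)$, gives $\CI^t(\pi,x) \le \C(\pi) + c\log|\pi x|$ with probability $1-o(1)$, provided $c$ is large enough to absorb the hidden $O(\log n)$ constant.

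For the ``no''-side bound I would use that a fresh uniform $U$ is Kolmogorov-random given $\pi$: a counting argument inside $\{0,1\}^{|x|}$ shows $\C(U \mid \pi^*) \ge |U| - O(\log n)$ with probability $1 - 1/\poly(n)$, where $\pi^*$ is a shortest program for~$\pi$. The additive form of symmetry of information then yields $\C(\pi, U) \ge \C(\pi) + \C(U \mid \pi^*) - O(1) \ge \C(\pi) + |U| - O(\log n)$. Because $|U| = |x| \ge d\log|\pi x|$ with $d > e$, the right-hand side exceeds $\C(\pi) + e\log|\pi U|$ for all large~$n$, so $(\pi, U) \in \mcN_e$ with probability $1 - o(1)$.

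The main obstacle will be constant-bookkeeping. Both parts hinge on squeezing an $O(\log n)$ overhead --- from the universal interactive simulator on one side, from the symmetry-of-information loss on the other --- inside the slacks $c\log|\pi x|$ and $(d{-}e)\log|\pi x|$. This is exactly where the hypotheses $c>3$ and $d>e$ get spent, and making the accounting tight is what the footnoted ``better parameters'' version would refine; everything else, including matching the infinitely many good $n$ for $A$ to the infinitely many $n$ in the leakage definition, is automatic once the constants line up.
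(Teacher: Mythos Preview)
Your approach is essentially the same as the paper's: use the $\iobpp$-decider directly as the distinguisher, show $(\pi,x)\in\mcY^t_c$ via the DH-bijection and $(\pi,U)\in\mcN_e$ via independence of $U$, and read off leakage $\ge \tfrac13-o(1)$. The paper's bookkeeping is a bit tighter than yours --- it uses the bijection as an \emph{equality} $\C(\pi\mid n)=\C(r_Ar_B\mid n)$ and settles for a fixed constant failure probability ($1/200$) rather than your $1/\poly(n)$, so the only overhead on the yes-side is the $(2{+}o(1))\log n$ from handing $n$ to both parties, and the no-side loss is $O(1)$; this is why $c>2$ and any $d>e$ suffice without the extra $O(\log n)$ slack your $1/\poly(n)$ choice would demand.
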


\begin{proof} 
  Let $n \mapsto (\pi,x,y)$ be a DH-like standard protocol with $|x| \ge d\log |\pi x|$. 
  We need to show that Eve behaves differently when given the key $x$ vs a random string $U$ of length~$|x|$. 

  Let $D$ be an $\iobpp$-decider for $(\mcY^t_c, \mcN_{e})$ and $n$ be a large length for which the $\iobpp$ requirement holds for all $(\pi,x)$ with $|(\pi,x)|= n$. 
  We prove that
  \[
    \Pr[D(\pi, x) \mathop= \textnormal{``outside $\mcN$''}] \;-\; \Pr[D(\pi, U)\mathop= \textnormal{``outside $\mcN$''}]  \;\ge\; \frac 1 3 - \frac 2 {100}. 
  \]

  \medskip
  \noindent
  {\em Right probability.} Recall that $U$ is independent of $\pi$ conditional to $|U|$. Hence, with probability $1-\tfrac 1 {100}$,
  \[
    \C(\pi, U) - \C(\pi) + O(1) \ge |U| = |x| \ge d \log |\pi x|. 
  \]
  Thus, $(\pi, U) \in \mcN_{e}$ since $d>e$ and $n$ is large. Hence, $D$ answers ``outside $\mcN$'' with probability at most $\tfrac 1 3 + \tfrac 1 {100}$, since $|(\pi,U)| = n$. 

  \medskip
  \noindent
  {\em Left probability.}
  Let $a$ and $b$ be the randomness used by Alice and Bob to obtain $(\pi,x,y)$ on input~$n$. 
  We prove that with probability $1-\tfrac 1 {100}$ and up to $O(1)$ terms: 
  \[
    \CI^t(\pi,x) - (2 + o(1))\log |\pi x| \le |ab| \le \C(ab|n) = \C(\pi|n) \le \C(\pi). 
  \]
  This implies that $(\pi, x)$ is in $\mcY^t_c$, since $c>2$. 
  Thus, $D$ answers ``outside $\mcN$'' with probability almost $\tfrac 2 3$. 

  The left inequality holds with probability $1-1/n^{\omega(1)}$, 
  because programs for $a$ and $b$ in the definition of interactive complexity 
  can be obtained from the randomness of Alice and Bob together with an encoding of~$n$, 
  (thus both to Alice's and Bob's randomness, a description of $n$ in binary is prepended). 
  Note that Bob's key indeed equals Alice's key $x$ with probability $1-1/n^{\omega(1)}$. 
  Also note that the runtime of the protocol is $O(|\pi x|)$, because it is standard. 
  Since $t(n) \ge \omega(n\log n)$, this computation can be simulated on the optimal pair $(U,V)$ in the definition~$\CI$. 

  The 2nd inequality holds with probability $1-\tfrac 1 {200}$, 
  since for fixed $n$, the randomness $a$ and $b$ has a fixed length, by definition of being standard. 
  The equality follows by bijectivity of DH-like protocols. 
  The chain of (in)equalities and the lemma are proven. 
\end{proof}

Already \cite{blmp23} had observed that \cref{prop:hardDirection} can be easily proven for DH-like protocols. 
It is easy to transform any protocol to an approximately DH-like protocol by letting Alice and Bob sending a long enough hash value of their randomness. 
But in general, the breaker only has the transcript of the original protocol. How can he add hash values to simulate the modified protocol? 

The idea is that a short enough hash value looks random (by the left-over-hash lemma). 
Thus, the breaker $E$ may replace hash values by random strings and then call the decider of $(\mcY^t_c, \mcN_e)$. 
The length of the hashes should be just right: short enough to look random, and long enough so that the simulated protocol is approximately DH-like. 
The idea is to guess these lengths. The probability of guessing correctly will be proportional to the leakage.\footnote{ 
  It seems that in \cite{blmp23}, hash functions are used from a set of size poly($n$), which is small. 
  The first author recognizes some similarities to the construction of dispersers in \cite{ats-uma-zuc:j:expanders}, 
  which are also used to obtain hash functions from a set of size~$n^8$ constructed in~\cite{teu:j:shortlists, bmvz:j:shortlist}. 
  In fact, the argument can be finished more briefly using these hashes, but it requires $c > 19$. 
  Below we give an elementary proof, which works with any universal set with $2^{O(n)}$ hash functions. 
  We will use the smaller size $2^{O(n^{2\epsilon})}$ with $\epsilon < 1/2$ to avoid technicalities regarding 
  the runtime of computing hashes and to obtain theorem~\ref{th:main} for any $c>0$, (as explained in the footnotes). 
}

\newcommand{\lemRandomInverseTwoHashes}{
  Consider a partition of a finite set $\mathcal R$ into a family of sets $A_\pi \times B_\pi$ for all $\pi$. 
  For each~$\pi$, let $H_\pi$ and $G_\pi$ be universal sets of hash functions 
  mapping $A_\pi$ to $W_\pi = \{0,1\}^{\lfloor \log |A_\pi|\rfloor}$ and $B_\pi$ to $V_\pi = \{0,1\}^{\lfloor \log |B_\pi|\rfloor}$. 

  Select a random $\pi$ with probability $|A_\pi \times B_\pi|/|\mathcal R|$. Also select random $h \in H_\pi$, $w \in W_\pi$, $g \in G_\pi$, $v \in V_\pi$. 
  With probability $\Omega(1)$, the following inequality is defined and is true:
  \[
    \C (h, h_{A_\pi}^{-1}(w), g, g_{B_\pi}^{-1}(v)) \,\ge\, \log (|H_\pi| \cdot |G_\pi| \cdot |\mathcal R|) - 5. 
  \]
}

\begin{proof}[Proof of \Cref{prop:hardDirection}]
  Let $D$ be a decider for $(\mcY^t_{c}, \mcN_{e})$ that for infinitely many $m$ answers correctly on $m$-bit inputs with probability $1-o(1/m)$. 
  (As usual, by repeated trials we may indeed assume that the success probability is very close to~$1$.) 
  Suppose that this happens for infinitely many {\em even}~$m$. If this is not the case, apply the padding in algorithm $E$ below to length $2n+1$ instead of~$2n$. 

  For $k \le \rho$, let $H_{\rho,k}$ be a universal set of hash functions from $\{0,1\}^\rho$ to $\{0,1\}^k$. 
  Assume that $\log |H_{\rho,k}| \le O(\rho^2)$ and that each $h \in H_{\rho,k}$ is represented as a string of length $\log |H_{\rho,k}| + O(1)$. 
  Say, we hash with random $k \times \rho$ binary matrices, for which $|h| = \rho k \le \rho^2$. 

  Note that in a reduction to the $\iobpp$-decider~$D$, the length needs to be carefully controlled, and this forces a discussion of some details. 
  Our choice of representation of pairs has $|(\pi,z)| = 2|\pi| + 1 + |z|$. 
  For $\ell \ge |(\pi,z)|$, let $(\pi,z)_{:\ell}$ be the {\em padded pair} $(\pi', z')$ such that $|(\pi',z')| = \ell$, string $z'$ extends $z$ by at most 1 bit and $\pi' = \pi 10^k$ for the appropriate~$k$. 

  Let $\pi\langle h,g,w,v\rangle$ be the transcript in which first Alice and Bob exchange~$\pi$. 
  Then Alice sends $h$, bit by bit, while Bob replies with $0$'s. Afterwards Bob sends $g$ in a similar way, and this is repeated with $w$, sent by Alice, and $v$, sent by Bob. 

  Finally, let $\epsilon > 0$ be small enough such that $c > 3 + 4\epsilon$ and $\epsilon < 1/2$. 
  
  \bigskip
  \noindent
  {\em Predicate $E(\pi, z)$.} Let $n = |(\pi,z)|$ and $\rho = \lfloor n^\epsilon \rfloor$. 
  Randomly select $\alpha, \beta \in [\rho]$ (the unknown hash lengths), $h \in H_{\rho,\alpha}$, $g \in H_{\rho,\beta}$ (the hash functions), and $w \in \{0,1\}^{\alpha}$ and $v \in \{0,1\}^{\beta}$ (the presumed hash values of $r_A$ and $r_B$). 
  Output $1$ if
  \[
    D\big((\pi\langle h,g,w,v\rangle, z)_{:2n}\big)
  \] 
  answers ``outside~$\mcN$'', and otherwise, output~$0$. 

  \bigskip
  \noindent
  Note that the padding to length $2n$ is possible because $h,g,w,v$ all have bit size at most $n^{2\epsilon} \le o(n)$. Hence, $|(\pi\langle h, g, w, v\rangle, z)| \le n + o(n) \le 2n$ for large~$n$. 

  Let $d > e + 2\epsilon$. 
  Let $n \mapsto (\pi,x,y)$ be standard with $|x| \ge d\log |\pi x|$ and generated with at most $\rho = \lfloor n^\epsilon \rfloor$ random bits.

  We prove that the protocol has $\Omega(1/n^{2\epsilon})$-leakage. 
  Thus for infinitely many $n$, 
  \begin{equation}
    \Pr[E(\pi, x)] - \Pr[E(\pi, U)] \ge \Omega(1/\rho^2). \label{eq:gl2}
  \end{equation}
  This implies the condition of \Cref{cor:GLka} and hence \Cref{prop:hardDirection}.  
  Let $n$ be large such that $D$ is correct with probaiblity $o(1/n)$ on inputs of length~$n$. 

  \begin{claim}
    $\Pr[E(\pi, U)] \le o(1/\rho^2)$. 
  \end{claim}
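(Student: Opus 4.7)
The plan is to reduce to a statement about the Kolmogorov complexity of the padded pair $(\pi', U') := (\pi\langle h,g,w,v\rangle, U)_{:2n}$ that $E$ feeds to the decider~$D$. The event $E(\pi, U) = 1$ is precisely the event that $D(\pi', U')$ answers ``outside~$\mcN$''. Since $D$ has been amplified so that its failure probability on $2n$-bit inputs is $o(1/n) \subseteq o(1/\rho^2)$, it suffices to show that $(\pi', U') \in \mcN_e$ with probability at least $1 - o(1/\rho^2)$ over all randomness; combining with $D$'s correctness by a union bound then yields the claim.

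The key point is that $U$ is uniform on $\{0,1\}^{|x|}$ and drawn independently of everything else, so after conditioning on any outcome of $\pi$ and $E$'s internal choices $\alpha, \beta, h, g, w, v$ (which together determine $\pi'$), the string $U$ is still uniform on $\{0,1\}^{|x|}$. I would then run the standard two-step Kolmogorov argument. The counting bound gives $\Pr_U[\C(U \mid \pi') < |U| - k] \le 2^{-k}$. Symmetry of information gives $\C(\pi', U') \ge \C(\pi') + \C(U' \mid \pi') - O(\log n)$; and since $U'$ is obtained from $U$ by appending at most one bit determined by $\pi'$, $\C(U' \mid \pi') = \C(U \mid \pi') + O(1)$. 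Combining, with probability at least $1 - 2^{-k}$,
\[
  \C(\pi', U') \;\ge\; \C(\pi') + |U| - k - O(\log n).
\]

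To certify $(\pi', U') \in \mcN_e$, this quantity should exceed $\C(\pi') + e\log(2n-1) + O(1)$. The encoding identity $n = 2|\pi| + 1 + |x|$ forces $|\pi x| \ge (n-1)/2$, so $|U| = |x| \ge d \log|\pi x| \ge d \log n - O(1)$. Taking $d$ a fixed constant strictly above $e + 2\epsilon$ (sufficient to absorb the additive chain-rule loss) and $k = \Theta(\log n)$ large enough that $2^{-k} \le o(n^{-2\epsilon}) = o(1/\rho^2)$, the slack $(d-e)\log n \ge 2\epsilon \log n$ accommodates both $k$ and the chain-rule constant. The main obstacle is keeping the additive $O(\log n)$ terms under control so that everything fits inside the slack $(d-e)\log n$; should this prove uncomfortably tight, I would invoke the first of the ``more remarks'' following \Cref{th:main} and replace $\mcN_e$ by the equivalent set $\{(\pi,x) : \C(x \mid \pi) \ge e \log |\pi x|\}$, which sidesteps symmetry of information entirely and makes the argument immediate from the counting bound alone.
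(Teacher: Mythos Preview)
Your proposal is correct and follows essentially the same route as the paper: show that the padded pair $(\tilde\pi,\tilde U)$ lands in $\mcN_e$ with probability $1-o(1/\rho^2)$ via independence of $U$ plus a counting bound, then invoke the amplified decider. The paper's argument is terser (it writes the single line $\C(\tilde\pi,\tilde U)-\C(\tilde\pi)\ge |U|-(2\epsilon+o(1))\log n$ without spelling out the chain-rule step), but the substance is identical; your observation that $d$ can be chosen large enough above $e+2\epsilon$ to absorb the $O(\log n)$ symmetry-of-information loss is exactly the freedom the paper is silently using.
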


  \noindent
  The analysis is similar to before. Let $(\tilde \pi,\tilde U)$ be the input given to the decider~$D$ in algorithm~$E$. 
  Note that $\tilde U$ either equals $U$ or $U$ with a single bit appended. 
  With probability $1 - o(1/\rho^2)$ over the choice of~$U$, 
  \[
    \C(\tilde \pi, \tilde U) - \C(\tilde \pi) \ge |U| - (2\epsilon+o(1))\log n \ge (d-2\epsilon-o(1)) \log |\tilde \pi \tilde U|. 
  \]
  Thus, $(\tilde \pi, \tilde U) \in \mcN_{e}$ for large $n$, by choice of $d > e + 2\epsilon$. 
  Hence, the $\iobpp$ decider outputs ``outside $\mcN$'' with probability $o(1/\rho^2)$. 
  This implies the claim.

  \begin{claim}
    $\Pr[E(\pi, x)] \ge \Omega(1/\rho^2)$. 
  \end{claim}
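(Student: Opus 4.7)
My plan is to mirror the strategy of \cref{lem:DHlikeHaveLeakage}, with the hash augmentation turning a general protocol into one that behaves effectively as DH-like. The central tool will be the lemma on random hash preimages stated above, applied to the partition of $\mathcal R = \{0,1\}^{\rho_A + \rho_B}$ into the sets $A_\pi \times B_\pi$ of randomness pairs consistent with each transcript $\pi$ of the original (unaugmented) protocol, where $\rho_A,\rho_B$ are the lengths of Alice's and Bob's randomness.

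First I would isolate the event ``$\alpha = \lfloor \log |A_\pi|\rfloor$ and $\beta = \lfloor \log |B_\pi|\rfloor$''. Since $\alpha,\beta$ are drawn uniformly from $[\rho]$ and both $\log|A_\pi|,\log|B_\pi|\le\rho$, this event has probability at least $1/\rho^2$. Under it the joint distribution of $(\pi, h, g, w, v)$ in algorithm $E$ matches the hypothesis of the hash-preimage lemma, so with constant conditional probability the canonical preimages $r_A' = h^{-1}_{A_\pi}(w)$ and $r_B' = g^{-1}_{B_\pi}(v)$ are defined and satisfy
\[
  \C(h,r_A',g,r_B') \;\ge\; |h|+|g|+\rho_A+\rho_B-O(1).
\]

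Next, this is converted into matching bounds on the two sides of the $\mcY^t_c$ inequality. Because $(h,r_A',g,r_B')$ can be computed from $\tilde \pi$ by brute-force enumeration of $A_\pi$ and $B_\pi$, we obtain $\C(\tilde \pi) \ge |h|+|g|+\rho_A+\rho_B-O(\log n)$. For the matching upper bound on $\CI^t$, Alice takes the program $\tilde a=(r_A',h)$ and Bob $\tilde b=(r_B',g)$: running the original protocol on $(r_A',r_B')\in A_\pi\times B_\pi$ produces exactly the transcript $\pi$, after which each party sends their hash function followed by the internally computed hash value $h(r_A')=w$ and $g(r_B')=v$, reconstructing $\tilde \pi$ bit-for-bit. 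This gives $\CI^t(\tilde \pi, x) \le \rho_A+\rho_B+|h|+|g|+O(\log n)$, so $\CI^t(\tilde \pi,x)-\C(\tilde \pi)\le O(\log n)\le c\log|\tilde \pi x|$ for large $n$. Hence $(\tilde \pi,x)_{:2n}\in \mcY^t_c$ (padding costs only $O(\log n)$), and $D$ outputs ``outside~$\mcN$'' with probability $\ge 2/3 - o(1)$. Multiplying the probabilities, $\Pr[E(\pi,x)=1]\ge \tfrac1{\rho^2}\cdot\Omega(1)\cdot(2/3-o(1)) = \Omega(1/\rho^2)$.

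The main obstacle is ensuring that Alice's and Bob's simulated programs actually output $x$ rather than some other key derived from $(r_A',r_B')$; in a general protocol different elements of $A_\pi$ may in principle produce different keys. This is handled via the standard assumption of $(1-1/n^{\omega(1)})$-agreement: except on a negligible subset of $A_\pi\times B_\pi$ both parties agree on a common value, which must then coincide with the original $x$, and with probability $1-o(1)$ the preimages supplied by the lemma fall inside this overwhelming ``agreeing'' subset. The resulting small loss is absorbed into the constant conditional probability above.
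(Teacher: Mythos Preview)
Your proposal is correct and mirrors the paper's argument essentially step for step: the same conditioning on correctly guessed hash lengths $\alpha=\lfloor\log|A_\pi|\rfloor$, $\beta=\lfloor\log|B_\pi|\rfloor$ (the paper's event EH, probability $1/\rho^2$), the same appeal to \Cref{lem:randomInverseTwoHashes} for high-complexity preimages, and the same chain $\CI^t(\tilde\pi,\tilde x)\lesssim |hagb|\lesssim \C(h,a,g,b)\lesssim \C(\tilde\pi)$ placing $(\tilde\pi,\tilde x)$ in $\mcY^t_c$.

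Your final paragraph on why the simulated parties output the true $x$ is in fact more explicit than the paper, which simply writes ``assuming EA'' (the event $x=y$ for the \emph{original} randomness) without elaborating --- strictly speaking EA alone does not guarantee that the hash preimages $(r_A',r_B')$ reproduce $x$. Your rectangle-style reasoning is the right fix; to make it fully rigorous note that one needs the near-uniformity of $(r_A',r_B')$ in $A_\pi\times B_\pi$, which is established inside the \emph{proof} of \Cref{lem:randomInverseTwoHashes} (items \ref{it:hashInv} and \ref{it:uniformInSubset}) rather than in its statement, so that the preimages avoid the negligible disagreeing set with probability $1-o(1)$ as you claim.
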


  \noindent
  It remains to prove this claim, since both claims together, imply inequality~\eqref{eq:gl2}. 

  We view $w$ and $v$ as hashes. For $A \subseteq \{0,1\}^\rho$, fix any pseudo inverse of $h$ in $A$, say the one based on the lexicographic first match in $A$, thus
  \[
    h_A^{-1}(w) = \min \{a \in A : h(a) = w\},
  \]
  when defined. Note that it may be undefined. Similar for $g_A^{-1}$. 
  
  Let $A_\pi \subseteq \{0,1\}^{\rho}$ be the set of Alice's random strings that are compatible with her replies in~$\pi$. 
  Let $B_\pi$ be defined similarly for Bob's randomness. These sets are computable given~$\pi$ and~$n$.

  The input of the decider $D$ in $E$ is 
  \[
    (\tilde \pi, \tilde x) = (\pi\langle h,g,w,v\rangle, x)_{:2n}. 
  \]
  We prove that with probability $\Omega(1/\rho^2)$, the values 
  \[
    a = h^{-1}_{A_\pi}(w) \quad \textnormal{ and } \quad b = g^{-1}_{B_\pi}(v)
  \]
  are defined and satisfy up to $o(\log n)$-terms,
  \begin{equation}\label{eq:chain}
    \CI^t(\tilde \pi, \tilde x) - (2 + 4\epsilon)\log n \le |hagb| \le \C(h,a,g,b) \le \C(\tilde \pi) + \log n. 
  \end{equation}
  Since $c>3+4\epsilon$, this implies $(\tilde \pi, \tilde x) \in \mcY^t_c$ with probability $\Omega(1/\rho^2)$. 
  Hence, for infinitely many~$n$, the decider outputs ``outside $\mcN$'' with probability $\Omega(1/\rho^2)$. This implies the claim.  
  It remains to prove that each of the 3 inequalities from~\eqref{eq:chain} holds with probability $\Omega(1/\rho^2)$ for all~$n$.   

  \begin{itemize}[leftmargin=*]
    \item 
      Let EH be the event that the hash sizes in algorithm $E$ are correct, thus that $\alpha=\lfloor \log |A_\pi|\rfloor$ and $\beta=\lfloor \log |B_\pi| \rfloor$. 
      This happens with probability $1/\rho^2$. 

    \item 
      The following lemma implies that conditional to EH, with probability $\Omega(1)$, the values $a$ and $b$ are defined and satisfy $\C(hagb) \ge |hagb| - O(1)$, 
      i.e., the middle inequality of~\eqref{eq:chain}. 

      \begin{lemma}\label{lem:randomInverseTwoHashes}
	\lemRandomInverseTwoHashes
      \end{lemma}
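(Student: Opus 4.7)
The plan is to prove the lemma in three steps: show that the pseudo-inverses $a = h^{-1}_{A_\pi}(w)$ and $b = g^{-1}_{B_\pi}(v)$ are simultaneously defined with constant probability; bound from above the probability of each resulting tuple $(h,a,g,b)$; and then invoke the standard Kolmogorov counting inequality to derive the complexity lower bound.

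For the first step, I fix $\pi$ and consider uniform $h \in H_\pi$ and $w \in W_\pi$. Applying the Chung--Erd\H{o}s second-moment inequality to the events $\{h(a) = w\}$ for $a \in A_\pi$, and using that $H_\pi$ is (pairwise-independent) universal together with $|W_\pi| \le |A_\pi| < 2|W_\pi|$, I obtain $\Pr[w \in h(A_\pi)] \ge \tfrac12$. The analogous bound holds for $v \in g(B_\pi)$, and by independence of the two samples given $\pi$, both pseudo-inverses are defined with probability at least $\tfrac14$.

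For the second step, each tuple $(\pi,h,w,g,v)$ is drawn with probability $\tfrac{|A_\pi\times B_\pi|}{|\mathcal R|}\cdot\tfrac{1}{|H_\pi||W_\pi||G_\pi||V_\pi|}$, and the map $(\pi,h,w,g,v)\mapsto(h,a,g,b)$ is a bijection on its defined domain because the partition recovers $\pi$ from $(a,b)$, and then $w=h(a)$ and $v=g(b)$. Using $|A_\pi|/|W_\pi|,|B_\pi|/|V_\pi| \in [1,2)$, each defined output $(h,a,g,b)$ receives probability at most $4/(|\mathcal R|\cdot|H_{\pi(a,b)}|\cdot|G_{\pi(a,b)}|)$.

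For the third step, the counting bound $\#\{x:\C(x)<k\}<2^k$ combined with the probability estimate yields $\Pr[\C(h,a,g,b) < \log(|H_\pi G_\pi \mathcal R|) - 5] \le 4\cdot 2^{-5} = 1/8$, provided the threshold $\log(|H_\pi G_\pi \mathcal R|)$ is the same for every outcome. This is exactly the situation in the application to \cref{prop:hardDirection}, where hash functions are drawn from $H_{\rho,\alpha}$ and $H_{\rho,\beta}$ whose sizes depend only on $\rho,\alpha,\beta$ and not on $\pi$. The main obstacle is the fully general statement in which these sizes vary with $\pi$ and the counting threshold is no longer uniform: I would handle it via the decomposition $\C(h,a,g,b) \ge \C(\pi) + \C(h,a,g,b \mid \pi) - O(\log n)$ (justified because $\pi$ is computable from $(a,b)$ via the partition), lower-bounding $\C(\pi)$ by $\log(|\mathcal R|/|A_\pi B_\pi|) - O(1)$ through a Kraft/Shannon--Fano argument on the marginal $\Pr[\pi] = |A_\pi B_\pi|/|\mathcal R|$, and lower-bounding $\C(h,a,g,b \mid \pi)$ by $\log(|H_\pi G_\pi A_\pi B_\pi|) - O(1)$ via the counting inequality applied to the conditional distribution. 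The two contributions telescope to $\log(|H_\pi G_\pi \mathcal R|) - O(1)$, with the logarithmic slack between plain and prefix complexity absorbed into the constant in the statement.
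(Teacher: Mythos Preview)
Your first two steps, and the constant-threshold case of step three, are correct and amount to the same argument the paper gives. The paper packages the second-moment computation as observations (e)--(f): for fixed $\pi$, the pair $(h,h^{-1}_{A_\pi}(w))$ is defined with probability $\ge 1/2$ and, when defined, uniform on a set of size $\ge\tfrac12|H_\pi|\,|W_\pi|$. It then uses (c) to pass to ``$15/16$-close to uniform on $H_\pi\times A_\pi\times G_\pi\times B_\pi$'', mixes over $\pi$ via (d), and finishes with the counting inequality~(b). Your pointwise-probability bound plus counting is the same finish in different dress.

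The gap is in your general-case sketch. A Shannon--Fano/Kraft argument yields the \emph{upper} bound $\C(\pi)\le -\log\Pr[\pi]+O(1)$, never a lower bound; a particular transcript can be wildly improbable under the protocol yet satisfy $\C(\pi)=O(1)$ (take the lexicographically first one). So the inequality $\C(\pi)\ge\log(|\mathcal R|/|A_\pi B_\pi|)-O(1)$ that your telescoping relies on is simply false, and the decomposition $\C(h,a,g,b)\ge\C(\pi)+\C(h,a,g,b\mid\pi)-O(\log n)$ does not deliver the desired conclusion. The $O(\log n)$ slack is also illegitimate here: the lemma carries no ambient parameter $n$, and the stated loss is an absolute~$5$.

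Your assertion that the application in \Cref{prop:hardDirection} is a constant-threshold instance is also not right. Conditioned on the event EH one has $\alpha=\lfloor\log|A_\pi|\rfloor$ and $\beta=\lfloor\log|B_\pi|\rfloor$, so $|H_\pi|=|H_{\rho,\alpha}|$ and $|G_\pi|=|H_{\rho,\beta}|$ genuinely vary with~$\pi$; the sizes depend on $\alpha,\beta$, but $\alpha,\beta$ are determined by $\pi$ under EH. The paper's route through this is the mixture step~(d), applied with $R_\pi=H_\pi\times A_\pi\times G_\pi\times B_\pi$, which folds all the $\pi$-dependent factors into a single uniform target $\bigcup_\pi R_\pi$ before invoking the counting inequality once. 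That is the step you should study, rather than attempting the Shannon--Fano decomposition.
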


      The lemma is proven in \Cref{sec:randomInverseHash}. 
      It is applied to $\mathcal R = \{0,1\}^\rho \times \{0,1\}^\rho$ (Alice and Bob's randomness), 
      $H_\pi = H_{\rho,\lfloor \log |A_\pi|\rfloor}$ and $G_\pi = H_{\rho,\lfloor \log |B_\pi|\rfloor}$. 
      Event EH implies that $W_\pi = \{0,1\}^\alpha$ and $V_\pi = \{0,1\}^\beta$. 
      Thus, the distribution of $\pi,h,w,g,v$ in \Cref{lem:randomInverseTwoHashes} is the same as in $E$ conditional to event~EH. 
      The middle inequality of~\eqref{eq:chain} is proven.

    \item 
      Given $\tilde \pi,n$ one can compute $A_\pi$ and $B_\pi$.
      The right inequality of~\eqref{eq:chain} follows by the choice of $a$ and~$b$, assuming that they are defined. 

    \item
      Let EA be the event that $x = y$. 
      By agreement, EA happens with probability~$1-1/n^{\omega(1)}$. 

    \item 
      We prove the left inequality of \eqref{eq:chain} conditional to event EA.  

      Alice knows $n,a,\alpha,h$, and Bob~$n, b,\beta,g$. 
      Together they possess $|ahbg|+ (2+4\cdot \epsilon + o(1))\log n$ bits of information.  
      They generate the transcript $\pi\langle h,g,w,v\rangle$ in the logical way. 
      First they simulate the protocol and obtain~$\pi$. 
      Then, they send $h,g,v,w$ in turns, where $w$ is the $\alpha$-bit hash of $a$ and similar for~$v$. 
      Using $n,\pi$ and their randomness, they each obtain~$x$, (assuming EA). It remains to add the padding to obtain $\tilde \pi, \tilde x$. 

      Recall that $\epsilon < 1/2$. 
      The above procedure runs in time $O(n)$, for example for hashing with random matrices using
      $
	H_{\rho,k} = \{a \mapsto Ma : M \in \{0,1\}^{k \times \rho} \}. 
	$
      Note that evaluating $Ma$ takes time $O(k\rho) \le O(n^{2\epsilon}) \le o(n)$. 
      The optimal machines $(U,V)$ in the definition of $\CI$ simulate everything in time $O(n \log n)$. 
      Since $t(n) \ge \omega(n \log n)$, the inequality holds. 
  \end{itemize}
  In summary, conditional to events EA and EH, all requirements are satisfied with probability~$\Omega(1)$. 
  Since EA and EH happen simultaneously with probability $\Omega(1/\rho^2)$, this implies the chain~\eqref{eq:chain} 
  and the claim. 
  Proposition~\ref{prop:hardDirection} is proven.\footnote{
    Recall that proposition~\ref{prop:hardDirection} assumes $c>3$.  
    To prove it for $c>0$, note that the excess term in the left and right inequality of~\eqref{eq:chain} is needed for providing~$n$. 
    One might think that $n$ should be included in the transcript, so that it also increases the right side of the inequality. 
    This solves the problem for the right inequality. 
    But this is not enough for the left one, because of the time bound in $\CI^t$ and $n$ might be computationally deep. 
    The solution is to make the program length reveal~$n$. 

    In the construction of~$E$, we prepend a random string $s$ of length $n/2$ to the transcript $\tilde \pi$. Only the length of $s$ will be used. 
    Let $\bar a$ be a prefix code of length $|a| + 2\log |a| + 1$. 
    Now Alice's program is $\bar a  \bar \alpha \bar h  s$, which has length $|ahs| + O(\epsilon \log n)$. Note that $s$ is not given in a prefix way. 
    In the beginning, Alice strips the 3 prefix codes from the program to obtain~$s$. She sends $s$ to Bob. 
    They both use $n = 2|s|$ or $n=2|s|+1$, depending on whether there are infinitely many even or odd $n$ for which the decider works. 
    Then the protocol and the analysis proceeds as before. This removes the $2\log n$ and $\log n$ terms from the left and the right inequalities in~\eqref{eq:chain}.  
  }
\end{proof}

\bibliographystyle{alpha}
\bibliography{theory-3}

\appendix

\section{Remarks about the $(\mcY^t_c,\mcN_e)$ problem}\label{sec:remarks}

Recall that
  \begin{align*}
    \mcY^t_c &= \{(\pi, x) : \CI^t(\pi, x) \le \C(\pi) + c \log |\pi x|\} \\
    \mcN_e &= \{(\pi, x) : \C(\pi, x) \ge \C(\pi) + e \log |\pi x|\}. 
  \end{align*}

\begin{itemize}[leftmargin=*,label=--]
  \item The problem $(\mcY^t_c,\mcN_e)$ depends on 3 parameters: $c,e,t$. 
    Decreasing $c$ or $t$, or increasing $e$, can only make the problem easier to solve (because this makes the sets smaller). 
     \Cref{th:main} implies that for all $c,e,t$ satisfying the conditions, either all problems $(\mcY^t_c,\mcN_e)$ are in $\iobpp$ or none are. 
  \item 
    If $r> 1$, then the set $\mcY^t_c \cap \mcN_{c+r}$ is finite, because 
    \[
      \CI^t(\pi, x) \ge \C(\pi, x) - (1 + o(1))\log |\pi|
    \]
    holds by concatenating programs and adding a prefix free description of the splitting point. 
    Thus for $e > c+1$, the promise problem cannot be trivially excluded from~$\iobpp$. 
\end{itemize}

\begin{lemma}\label{lem:}
  If $r > 3$, then the problem $(\mcY^t_c, \mcN_{c+r})$ can be solved in space $O(t \log t)$. 
\end{lemma}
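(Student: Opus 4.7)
The plan is to reduce the promise problem to thresholding a single combinatorial count computable by brute force in space $O(t \log t)$. I would compute two quantities: $K := \CI^t(\pi, x)$, obtained by enumerating all pairs $(a,b)$ with $|ab| \le |\pi x|+O(1)$ and simulating $U(a) \leftrightarrow V(b)$ for $t(|\pi x|)$ steps, and $N := |S_\pi|$, where $S_\pi$ is the set of pairs $(a, b)$ with $|ab| \le K$ whose interaction produces transcript exactly $\pi$ in at most $t$ steps. A polynomial-size counter together with an $O(t \log t)$-bit simulation tape fits within the space budget.

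The crux is a pair of bounds relating $N$ to standard Kolmogorov quantities. A counting argument over all short pairs gives $\C(\pi) \le K - \log N + O(\log|\pi x|)$: since $\sum_{\pi'} |S_{\pi'}| \le 2^{K+1}$, at most $2^{K+1}/N$ transcripts admit $\ge N$ protocol pre-images, so $\pi$ is described by $K$, $N$, and its index in the canonical enumeration of such transcripts. Conversely, restricting $S_\pi$ to pairs with matching outputs and listing the distinct common values encodes $x$ given $\pi$, yielding $\C(x \mid \pi) \le \log N + O(\log|\pi x|)$. Plugging in the promise: in $\mcY^t_c$ the hypothesis $K \le \C(\pi) + c \log|\pi x|$ combined with the first bound gives $\log N \le c \log|\pi x| + O(\log|\pi x|)$, while in $\mcN_{c+r}$ the Levin--G\'acs symmetry of information applied to $\C(\pi, x) \ge \C(\pi) + (c+r)\log|\pi x|$ yields $\C(x\mid\pi) \ge (c + r)\log|\pi x| - O(\log|\pi x|)$, hence $\log N \ge (c + r)\log|\pi x| - O(\log|\pi x|)$. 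For $r > 3$ the two regimes are separated by more than $2\log|\pi x|$ on all sufficiently long inputs, so the algorithm outputs $\mcY$ iff $\log N \le (c + r/2) \log|\pi x|$.

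The main obstacle is verifying the incompressibility bound $\C(\pi) \le K - \log N + O(\log|\pi x|)$: one must check that the canonical enumeration of transcripts with many short protocol pre-images is implementable by an (unbounded-time) Turing machine, and that the prefix-free encoding of $K$ and $N$ contributes only $O(\log|\pi x|)$ additive bits. Once this Kolmogorov bound is in hand, computing $K$ and $N$ in $O(t \log t)$ space is elementary and the decision reduces to the single threshold comparison.
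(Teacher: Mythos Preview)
Your approach is different from the paper's and is sound in outline, but the claim that $r>3$ suffices is not supported by the argument as written.

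The paper's algorithm computes a single quantity, the space-bounded conditional complexity $\Cs^{s'}(x\mid\pi)$ with $s'=O(t\log t)$, and thresholds it at $(c+2+\eps)\log|\pi|$. Correctness on the $\mcN$ side is just the easy direction of the chain rule; correctness on the $\mcY$ side invokes \emph{space-bounded} symmetry of information (Longpr\'e), giving $\Cs^{s'}(x\mid\pi)+\C(\pi)\le \CI^t(\pi,x)+(2+o(1))\log n$. It is precisely this controlled $2\log n$ loss that yields $r>3$.

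Your route via $K=\CI^t(\pi,x)$ and $N=|S_\pi|$ avoids space-bounded symmetry and is more elementary, but each of your three inequalities carries its own additive overhead that you have not tracked. The bound $\C(\pi)\le K-\log N+O(\log n)$ requires encoding $n$ (for the time bound $t(n)$), $K$, and a rounded $\log N$ before the index, costing roughly $3\log n$. The bound $\C(x\mid\pi)\le\log N+O(\log n)$ needs $|x|$ and $K$ on top of $\pi$, costing roughly $2\log n$ (or $\log n$ with a slightly cleverer enumeration). Levin--G\'acs symmetry contributes another $(1+o(1))\log n$. Summing, the separation between the two regimes is $(r-C)\log n$ with $C\approx 5$ or $6$, not the ``more than $2\log|\pi x|$'' you assert for $r>3$. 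So your algorithm and both Kolmogorov bounds are correct, but they only establish the lemma for $r$ larger than about $5$; to reach $r>3$ you would need to match the paper's tighter bookkeeping, which is exactly what the space-bounded symmetry lemma provides.

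A small side issue: the search bound $|ab|\le|\pi x|+O(1)$ is insufficient on $\mcY^t_c$, since there one only knows $K\le|\pi|+c\log|\pi x|+O(1)$, which can exceed $|\pi x|$ when $|x|$ is short. Use $|\pi x|+c\log|\pi x|+O(1)$ instead, and output ``outside $\mcY$'' if no witness is found within that bound.
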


\begin{proof}
  For an integer $s$, 
  let $\Cs^s(\cdots)$ be the Kolmogorov complexity on a reference machine with programs that use $s$ cells on a binary work tape. 
  Symmetry of information holds with a space bound:
  \[
    \Cs^{s'}(\pi) + \Cs^{s'}(x|\pi) \le \Cs^s(\pi,x) + (1+o(1))(\log \Cs^s(\pi) + \log \C^s(x|\pi)), 
  \]
  where $s' = 3s + 10|\pi|$. This follows from the standard argument of symmetry of information and careful handling of the logarithmic terms. 
  (See \cite[theorem 3.13, p35]{Longpre1986resource} or~\cite[theorem 1]{Bauwens2022Inequalities} for recent improvements on $s'$. 
  Note that some parameters are almost equal to the program lengths, see~\cite[proposition 2]{BauwensPlainAdditivity}). 

  Let $\eps > 0$ be small such that $2\eps < r - 3$. Let $\tilde c$ be a constant that we choose later. 

  \medskip
  \noindent
  {\em Algorithm on input $(\pi,x)$.} 
  Let $n = |\pi x|$ and $s' = 3\tilde c \cdot t(n) \log t(n)+10n$. If 
  \begin{equation}\label{eq:inequalityDecidability}
    \Cs^{s'}(x|\pi) \le (c+2+\eps) \log |\pi|,
  \end{equation}
  then answer ``outside $\mcN$'', otherwise, answer ``outside $\mcY$''. 

  \medskip
  \noindent
  We verify correctness. If the algorithm outputs ``outside $\mcN$'', then inequality \eqref{eq:inequalityDecidability} holds. 
  Thus, $\C(\pi, x) < C(\pi) + C(x|\pi) + (1+\eps)\log n \le \C(\pi) + (c+r) \log n$ holds, and indeed, $(\pi,x)$ is outside~$\mcN_e$. 

  Let $d = \Cs^{s'}(x|\pi) - (c+2+\eps) \log n$. 
  If the algorithm outputs ``outside $\mcY$'', then $d > 0$. 
  Space-bounded symmetry of information implies
  \begin{align*}
    (c+2+\eps) \log \pi + \C(\pi) & = \Cs^{s'}(x|\pi) + \C(\pi) - d \\
     & \le \Cs^{s}(\pi, x) + (1+o(1))(\log n + \log d) - d\\
     &\le \CI^t(\pi, x) + (2+o(1))\log n,
  \end{align*}
  where $s = \tilde c \cdot t(n) \log t(n)$ and $\tilde c$ is chosen large enough for the simulation of the optimal pair $(U,V)$ in $\CI$. 
  This implies $(\pi, x) \not\in \mcY^t_c$. 
  Thus also if the algorithm outputs ``outside $\mcY$'', it is correct. 
\end{proof}

\bigskip
\noindent
The main result of \cite{blmp23} considers a similar problem on triples $(\pi, x, y)$. We explain that it is a corollary of theorem~\ref{th:main}.
For $d>0$, let
\begin{align*}
  \CI^t(\pi,x,y) &= \min \{|ab| : (\pi,x,y) = U(a){\leftrightarrow} V(b) \textnormal{ in time } t(|\pi|)\} \\
  Q^t_d &= \{(\pi,x,y) : |\pi|=|x|=|y| \textnormal{ and } \max(\C^t(x|y), \C^t(y|x)) \le d\log |\pi|\}, 
\end{align*}
and $Q_0 = \{(\pi,x,x) : |\pi|=|x|\}$. 

\begin{theorem}[\cite{blmp23}]\label{th:blmp23}
  Let $d\ge 0$, $c> 3$, $e-c > 9 + 2d$ and $t(n) = n^\gamma$ for some $\gamma>1$. Key-agreement exists if and only if the promise problem 
  \begin{align*}
     \widetilde{\mcY}^t_{c,d} &= Q_d \cap \{(\pi,x,y) : \CI^t(\pi,x,y) \le \C(\pi) + c\log |\pi|\}\\
     \widetilde{\mcN}_{e,d} &= Q_d \cap \{(\pi,x,y) : \C(\pi,x,y) \ge \C(\pi) + e\log |\pi|\}
  \end{align*}
  is not in $\iobpp$.
\end{theorem}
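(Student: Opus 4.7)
The plan is to derive Theorem~\ref{th:blmp23} from Theorem~\ref{th:main} by two reductions between the pair problem $(\mcY^t_c, \mcN_e)$ and the triple problem $(\widetilde{\mcY}^t_{c,d}, \widetilde{\mcN}_{e,d})$, with carefully tracked parameter shifts so that the hypotheses of Theorem~\ref{th:main} are met on both sides.

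For the ``only if'' direction I argue contrapositively: if $(\widetilde{\mcY}^t_{c,d}, \widetilde{\mcN}_{e,d}) \in \iobpp$, then $(\mcY^{t}_{c_0}, \mcN_{e_0}) \in \iobpp$ for some $c_0 = c - O(1)$ and $e_0 = e + O(1)$, contradicting Theorem~\ref{th:main}. The reduction maps a pair $(\pi, x)$ to a triple $(\pi', x', y')$ by padding both components to a common length $N \ge \max(|\pi|, |x|)$ with a self-delimiting end marker and setting $x' = y'$. The resulting triple sits in $Q_0 \subseteq Q_d$. Padding shifts $\C(\pi)$, $\C(\pi, x)$, and $\CI^t(\pi, x)$ by only $O(\log N)$ and the target logarithms $\log|\pi x|$ and $\log|\pi'|$ agree up to $O(1)$, so both $\mcY$- and $\mcN$-membership transfer with a constant loss in the parameters $c, e$.

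For the ``if'' direction I reduce $(\widetilde{\mcY}^t_{c,d}, \widetilde{\mcN}_{e,d})$ to $(\mcY^{t'}_{c'}, \mcN_{e'})$ with $c' = c + d + O(1)$ and $e' = e - d - O(1)$ via the simple map $(\pi, x, y) \mapsto (\pi, x)$. The essential use of $Q_d$ is that it furnishes a program $p$ of length at most $d\log|\pi|$ computing $x$ from $y$ in time $t$. For the $\mcY$ side, starting from programs $(a, b)$ that interactively realize the triple $(\pi, x, y)$, I extend Bob's program to $bp$: he first runs $b$ to obtain $y$ and then applies $p$ to output $x$. Since Alice also outputs $x$ via $a$, the modified pair realizes the triple $(\pi, x, x)$, giving $\CI^{t'}(\pi, x) \le \CI^t(\pi, x, y) + d\log|\pi| + O(\log)$ within a polynomial time bound, and with $|\pi| = |x|$ in $Q_d$ we have $\log|\pi x| = \log|\pi| + O(1)$. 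For the $\mcN$ side, symmetry of information combined with the $Q_d$-bound on $\C^t(y|x)$ gives $\C(\pi, x, y) \le \C(\pi, x) + d\log|\pi| + O(\log)$, yielding $\C(\pi, x) \ge \C(\pi) + (e - d - O(1))\log|\pi|$.

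The hypothesis $e - c > 9 + 2d$ absorbs both the $+d$ shift of $c$ and the $-d$ shift of $e$ plus the $O(1)$ slack from log encodings and symmetry of information, ensuring the resulting parameters satisfy $c' > 3$ and $e' > c' + 6$, i.e., the regime under which Theorem~\ref{th:main} has been proved. The main obstacle is this constant bookkeeping combined with the $\iobpp$ size-matching: both reductions change input length by a linear factor, so ``good'' input sizes on one side correspond to infinitely many ``good'' input sizes on the other; this is handled cleanest by passing through the interval-$\iobpp$ formulation, whose equivalence is noted in the remarks following Theorem~\ref{th:main}.
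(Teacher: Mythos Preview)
Your proposal is correct and follows essentially the same approach as the paper: both directions use the same reductions ($(\pi,x)\mapsto(\pi',x',x')$ with padding to equal lengths, and $(\pi,x,y)\mapsto(\pi,x)$ using the $Q_d$ bound to append a $y\to x$ converter to Bob's program for the $\mcY$-side and the chain rule for the $\mcN$-side), with the same parameter shifts $c\to c+d+\eps$, $e\to e-d-\eps$. Your explicit remark about routing the input-length matching through interval-$\iobpp$ is a point the paper leaves implicit.
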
    

\begin{proof}[Proof using theorem~\ref{th:main}.]
  Assume $0 < \eps  < c-3$. If $(\pi,x) \in \mcY^t_{c}$ and $|\pi|=|x|$, then $(\pi,x,x) \in \widetilde \mcY^t_{c+\eps,d}$. 
  Similarly, if $(\pi,x) \in \mcN_{e}$, then $(\pi,x,x) \in \widetilde \mcN_{e-\eps,d}$.

  Thus, if $(\widetilde \mcY^t_{c,d}, \widetilde \mcN_{e,d}) \in \iobpp$, 
  then the variant of the $(\mcY^t_{c+\eps}, \mcN_{e-\eps})$ with $|\pi|=|x|$ is in~$\iobpp$. 
  But then for $s = n^{\gamma - \eps}$, also $(\mcY^s_{c+2\eps}, \mcN_{e-2\eps}$ is in~$\iobpp$, because we can padd the shorter string of the pair and this affects the complexities by~$O(1)$.  
  By  theorem~\ref{th:main}, key-agreement does not exist.  

  \bigskip
  \noindent
  Now the other direction is proven. 

  \begin{claim}
    If $s \ge \omega(t \log t)$ and
    $(\mcY^{s}_{c+d+\eps}, \mcN_{e-d-\eps}) \in \iobpp$, then $(\widetilde \mcY^t_{c,d}, \widetilde \mcN^t_{e,d}) \in \iobpp$. 
  \end{claim}

  \begin{proof}
    This follows from 
    \begin{align*}
      (\pi,x,y) \in \widetilde \mcY^t_{c,d}  \quad &\Longrightarrow \quad (\pi,x) \in \mcY^s_{c+d+\eps} \\
      (\pi,x,y) \in \widetilde \mcN^t_{e,d}  \quad &\Longrightarrow \quad (\pi,x) \in \mcN_{e-d-\eps}. 
    \end{align*}
    For the first implication, observe that
    \[
      \CI^s(\pi,x) \le \CI^t(\pi,x,y) + (1+o(1))\C^t(y|x). 
    \]
    Indeed, ``Bob's'' program $b$ has output $y=V(b,\pi)$, so we prepend a program to $b$ that maps $y$ to~$x$. 
    By definition of $Q^t$, we have $\C^t(y|x) \le d\log |\pi|$. 
    We need a large time bound $s$ to account for the simulation of the programs on the fixed optimal machines in~$\CI$. 
    The first implication is proven. 

    For the second, note that dropping $y$ in the complexity, can only decrease the complexity by $\C(y|\pi, x) + O(\log \C(y|\pi,x))$. 
    By definition of $Q_d$, this is at most $(d+o(1))\log |\pi|$. The claim is proven. 
  \end{proof}
  
  \noindent
  Now we finish the proof of~\Cref{th:blmp23}. 
  Assume key-agreement does not exist. 
  By \Cref{th:main} this implies $(\mcY^{n^{\gamma}}_{c+d+\eps}, \mcN_{e-d-\eps}) \in \iobpp$ for every $\gamma > 1$. 
  By the claim, this implies $(\widetilde \mcY^{n^\gamma}_{c,d}, \widetilde \mcN^{n^\gamma}_{e,d}) \in \iobpp$ for every $\gamma > 1$. 
\end{proof}

\section{Proof of \Cref{cor:GLka}}\label{sec:proof-GL-corollary}

\begin{lemma}[\cite{gol-lev:c:one-way,yao:c:oneway}]\label{lem:GL}
There exists an oracle-aided probabilistic polynomial time algorithm $A$ and a positive polynomial $p$ 
such that the following holds for all integers $\ell, n$, all $\alpha > 0$, all distributions $Q$ over $\{0,1\}^* \mathop\times \{0,1\}^n$, 
and every polynomial time algorithm~$E$. If
\[
\Big| 
       \Pr \big[E(\pi,r_1 \cdots r_\ell, x \odot r_1 \cdots x \odot r_\ell)\big] 
     - \Pr \big[E(\pi,r_1 \cdots r_\ell, U)\big] 
     \Big|  \;\ge\; \alpha
\]
for $(\pi,x) \sim Q$ and for uniformly random $r_i \in \{0,1\}^n$, $U \in \{0,1\}^\ell$, then
\[
\Pr_{(\pi,x) \sim Q} [A^E(0^n, 0^\ell, 0^{\left\lceil \!\tfrac 1 \alpha\! \right\rceil}, \pi) = x] \;\ge\; p(\alpha, 2^{-\ell}, \tfrac 1 n). 
\]
\end{lemma}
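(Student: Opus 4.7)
The plan is the standard combination of Yao's hybrid argument with the single-bit Goldreich--Levin theorem. First I would reduce the $\ell$-bit distinguishing advantage $\alpha$ to a single-bit prediction advantage for $\langle x,r\rangle$ given $(\pi,r)$; then I would feed the resulting predictor into the classical GL extractor.

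For the hybrid, define $H_i$ to be the distribution where the first $i$ bits of the suffix are uniform and the remaining $\ell-i$ are the true inner products $\langle x,r_j\rangle$, so $H_0$ and $H_\ell$ correspond to the two sides of the given distinguishing advantage~$\alpha$. Triangle-inequality averaging then yields an index $i^*$ for which $E$ distinguishes $H_{i^*-1}$ from $H_{i^*}$ with advantage at least $\alpha/\ell$. The standard ``flip-a-coin'' next-bit trick converts this into a predictor $P$ that, given $(\pi,r_1,\ldots,r_\ell)$ together with fresh uniform bits $U_1,\ldots,U_{i^*-1}$ and the true inner products $\langle x,r_j\rangle$ for $j>i^*$, outputs $\langle x,r_{i^*}\rangle$ with advantage $\Omega(\alpha/\ell)$.

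Next I would reshape $P$ into a GL-style predictor $\tilde P(\pi,r)$ that takes only $\pi$ and a single vector~$r$. The algorithm $A$ first picks $i^*\in[\ell]$ uniformly (contributing a factor $1/\ell$), then enumerates the $2^{\ell-i^*}$ possible guesses $(b_{i^*+1},\ldots,b_\ell)\in\{0,1\}^{\ell-i^*}$ for the missing inner products; for any fixed guess, $\tilde P$ places $r$ at position~$i^*$, samples the remaining $r_j$'s and the uniform bits $U_{<i^*}$ internally, and feeds everything to~$E$. For the (at least one) choice of guess for which $b_j=\langle x,r_j\rangle$ for all $j>i^*$, the predictor $\tilde P$ faithfully simulates $P$ and inherits its advantage $\Omega(\alpha/\ell)$; applying the classical single-bit Goldreich--Levin theorem to this predictor recovers $x$ from $\pi$ with probability $\poly(\alpha/\ell,1/n)$. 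Outputting a uniformly random candidate from the list of $\ell\cdot 2^\ell$ candidates produced by $A$ therefore succeeds with probability at least $\poly(\alpha,2^{-\ell},1/n)$, which is bounded below by $p(\alpha,2^{-\ell},1/n)$ for a suitable positive polynomial~$p$. The auxiliary inputs $0^n$, $0^\ell$, $0^{\lceil 1/\alpha\rceil}$ are used to set the internal loop bounds of $A$ (number of GL trials, size of the pairwise-independent seed, etc.).

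The main obstacle is the reshaping step: because the joint distribution of $(r_j,\langle x,r_j\rangle)$ differs from $(r_j,U_j)$, the ``real'' inner products for $j>i^*$ cannot be simulated without knowing~$x$. Guessing them introduces an unavoidable $2^{-\ell}$ loss, matching the dependence on $2^{-\ell}$ in the lemma's conclusion; care must be taken to arrange that the \emph{unconditional} success probability is $\Omega(2^{-\ell}\cdot\alpha/\ell)$, which is why $A$ enumerates the candidates and outputs a random one rather than relying on a single run of a predictor whose advantage may be negative on wrong guesses.
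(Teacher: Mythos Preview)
Your approach---Yao's hybrid argument to obtain a next-bit predictor, then compensate for the missing inner-product advice, then invoke single-bit Goldreich--Levin---is exactly the paper's. The only cosmetic difference is the direction of the hybrids: you need the true inner products for positions $j>i^*$, whereas the paper's convention requires those for $j<i$; either is fine.

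There is, however, a genuine gap. You have $A$ \emph{enumerate} the $2^{\ell-i^*}$ possible guesses for the missing advice bits and run Goldreich--Levin once per guess. This is not polynomial time: the lemma requires $A$ to run in time polynomial in its input length $n+\ell+\lceil 1/\alpha\rceil+|\pi|$, and $2^{\ell}$ is not bounded by any polynomial in~$\ell$. The paper instead has $A$ sample the auxiliary $r_j$'s once and \emph{guess} the advice bits uniformly at random (a single guess, not an enumeration), hard-wiring both into the predictor before calling Goldreich--Levin. With probability $\Omega(2^{-\ell})$ the guess is correct, and conditioned on that event GL succeeds with probability $\poly(\alpha/\ell,1/n)$; the $2^{-\ell}$ factor therefore appears in the success probability---precisely what $p(\alpha,2^{-\ell},1/n)$ permits---rather than in the running time. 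Your concern that ``a single run of a predictor whose advantage may be negative on wrong guesses'' would spoil things is unfounded: a wrong guess just makes that run output garbage, which does not diminish the probability that the correct-guess event produces~$x$. Relatedly, the auxiliary $r_j$'s (for $j\neq i^*$) must be sampled \emph{once} by $A$ and fixed before GL is invoked, not resampled ``internally'' on each call to~$\tilde P$; otherwise there is no well-defined ``correct guess'' $b_j=\langle x,r_j\rangle$ at all.
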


\begin{proof}[Proof Sketch]
  Using Yao's next-bit predictor theorem (\S9.3.1 in~\cite{aro-bar:b:complexity}), we can obtain from $E$ a randomized predictor $P(\pi, r_1, \ldots, r_i, r_1 \odot x, \ldots, r_{i-1} \odot x)$, such that, for a typical choice of $r_1, \ldots, r_{i-1}$, when given $x \odot r_1, \ldots, x \odot r_{i-1}$ as advice the predictor outputs $x \odot r_i$ with probability $\frac{1}{2} + \frac{\alpha}{\ell}$ over the choice of~$r_i$. We are not assuming that $A$ has access to $x \odot r_1, \ldots, x \odot r_{i-1}$ for such a typical choice, so we have $A$ sample $r_1, \ldots, r_{i-1}$ at random and guess the values of $x \odot r_1, \ldots, x \odot r_{i-1}$, and with probability $\Omega(2^{-\ell})$ the $r$'s are typical and the guess is correct with probability $\alpha/\ell$.

    So, conditioned on this event, we have obtained a predictor which allows us to output pairs $(r, x \odot r)$, for random $r$, which are correct with probability $\ge \frac{1}{2} + \alpha/\ell$. The Goldreich--Levin theorem (\S9.3.2 in~\cite{aro-bar:b:complexity}) then says that from such a predictor we can obtain a list of size polynomial in $\ell/\alpha$, one of whose elements is $x$. Then $A$ outputs a random element in this list. 
\end{proof}

\begin{proof}[Proof of  \Cref{cor:GLka}.]
  Note that the definition of key-agreement protocol does not change if we rescale $n$ polynomially: 
  if for $\delta>0$, on input $n$, we return the triple $(\pi,x,y)$ from the input $n^{\delta}$ instead, then the protocol remains key-agreement. 
  Indeed, if $\eps(n) < n^{-c}$ for all $c$, then also $\eps(n^{\delta}) < n^{-c}$ for all $c$. Similar for the neglibility of agreement failure. 
  Hence, it is enough to prove that under the assumption of the corollary, there exist no key-agreement protocols with runtime at most $n^{\epsilon/2}$. 
  For such protocols, the lengths $|x|,|y|,|\pi|$ are at most~$n^{\epsilon/2}$. 

  Fix such a protocol $n \mapsto (\pi,x,y)$ with agreement $1-1/n^{\omega(1)}$. 
  Consider a new protocol $n \mapsto (\pi',x',y')$ where Alice and Bob first obtain $(\pi,x,y)$ on input~$n$. 
  Then Alice sends $d\log n$ strings of length $|x|$ to Bob, so that $\pi' = \pi r_1 \ldots r_\ell$. 
  Alice's key is $x' = x \odot r_1 \dots x \odot r_\ell$ and Bob's key is similar. 
  This protocol has the same agreement $1 - 1/n^{\omega(1)}$. 
  To make the protocol standard, we apply the right amount of padding to $\pi'$ so that $|(\pi',x')| = n$.  
  Finally, we ensure that some amount of randomness is read that only depends on $n$. Now, the new protocol with mapping $n \mapsto (\pi',x',y')$ is standard. 

  By assumption of \Cref{cor:GLka}, this protocol has leakage $n^{-b}$. 
  Thus, there exists an adversary $E$ that distinguishes $x'$ from random with probability-difference $n^{-b}$. 
  By lemma~\ref{lem:GL}, this gives us an adversary $A$ that predicts $x$ with non-negligible probability.
  Hence, our original protocol $n \mapsto (\pi,x,y)$ is not a key-agreement one. 
  We conclude that there exist no key-agreement protocols with runtime $n^{\epsilon/2}$, and hence, no key-agreement protocols in general. 
\end{proof}

\section{Proof of \Cref{lem:randomInverseTwoHashes}} \label{sec:randomInverseHash}

\begin{definition}\label{def:}
  We say that a variable $V$ is {\em $\gamma$-close to uniform} in $R$ if its statistical distance to $U_R$ is at most~$\gamma$, 
  where $U_R$ is a uniformaly distributed variable in~$R$. 
\end{definition}

\noindent
The proof uses the following observations. 

\begin{enumerate}[leftmargin=*, label=(\alph*)]
  \item Let $V$ be $(\gamma-\delta)$-close to uniform in $R$. 
    If $S \subseteq R$ and $|S| \ge \gamma|R|$, then $\Pr[V {\in} S] \ge \delta$. 

   \item \label{it:KolmCloseToUnifrom} If $V$ is $(1-2\eps)$-close to uniform in $R \subseteq \{0,1\}^*$, then $\Pr[\C(V) \ge \log (\eps|R|)] \ge \eps$. 

   \item \label{it:uniformInSubset} If $V$ is uniform in $S \subseteq R$ and $|S| \ge \eps |R|$, then $V$ is $(1-\eps)$-close to uniform in~$R$. 

   \item \label{it:mixtureClose} Let sets $R_1, R_2, \ldots$ partition a finite set $\mathcal R$. 
     For each~$\pi$, let $V_\pi$ be a random variable that is $\gamma$-close to uniform in~$R_\pi$. 
     Let $M = V_\pi$ where $\pi$ randomly selected with probability $|R_\pi|/|\mathcal R|$. 
     Then, $M$ is $\gamma$-close to uniform in~$\mathcal R$. 

   \item \label{it:hash}
     Let $H$ be a universal set of hash functions from $A$ to $W$ with $|W| \le |A|$. 
     For a random $h \in H$ and $a \in A$, the pair $(h(a),h) \in W \times H$ is supported on a set of size at least $\tfrac 1 2 |H| \cdot |W|$. 

   \item \label{it:hashInv}
     Let $H,A,W$ be as in~\ref{it:hash}. 
     For a random $h \in H$ and $w \in W$, the pair $(h,h^{-1}_A(w)) \in H \times A$ is defined with probability $1/2$ and 
     conditional to being defined, it is uniformly distributed on a set of size $\tfrac 1 2 |H| \cdot |W|$. 
\end{enumerate}

\begin{proof}
  (a) The statistical distance between two measures is the maximal difference of the probabilities of some set. Hence, $\Pr[V {\in} S] \ge \Pr[U {\in} S] - \gamma + \delta \ge \delta$. 

  \smallskip
  \noindent
  (b) Set $\gamma = 1-\eps$ and $\delta = \eps$ in (a). Less than $N$ strings $x$ satisfy $\C(x) < \log N$. 

  \smallskip
  \noindent
  (c) $\Pr[V {\in} S'] - \Pr[U_R {\in} S']$ is maximized for $S' = S$, and this maximum equals $1-|R|/|S|$. 

  \smallskip
  \noindent
  (d) 
  A uniform random variable in $\mathcal R$ is obtained as $U_{R_\pi}$, where $\pi$ is selected in the same way. 
  The statistical distance is now the weighted average of the distances between $U_{R_\pi}$ and $V_\pi$. 

  \smallskip
  \noindent
  (e) This is a special case of the left-over-hash lemma, see e.g.~\cite[lemma 21.25 p445]{aro-bar:b:complexity}. We prove it for convenience. 

  If $p$ has $d$ nonzero entries, then $||p||_1 \le \sqrt d ||p||_2$, (by the Cauchy-Schwartz inequality). 
  Let $p$ be the vector whose entries are the probabilities of $(h(a),h)$ in all elements in $W \times H$. Thus $||p||_1 = 1$. 

  Note that $||p||_2^2$ is the probability that $(h(a),h) = (h'(a'),h')$ for an independent copy $(h',a')$ of $(h,a)$. 
  If $a \not= a'$, then $\Pr[h(a) \mathop= h(a')] = 1/|W|$ by universality of~$H$. Hence, 
  \[
     \frac 1 d = \frac 1 d ||p||_1^2 \le ||p||_2^2 \le \frac 1 {|H|} \big( \frac 1 {|A|} + \frac 1 {|W|} \big) \le \frac 2 {|H| \cdot |W|}.
  \]

  \smallskip
  \noindent
  (f) For each pair $(w,h)$ on which $p$ in the above proof is positive, the pair $(h,h^{-1}_A(w))$ is defined. 
  Moreover, this mapping is injective, and injective mappings preserve uniformity of random variables on their support. 
\end{proof}


\begin{proof}[Proof of \Cref{lem:randomInverseTwoHashes}.] 
  Fix $\pi$ and for random $h,w,g,v$ let
  \[
    X_\pi = (h, h^{-1}_{A_\pi}(w)) \qquad \textnormal{and} \qquad Y_\pi = (g, g^{-1}_{B_\pi}(v)). 
  \]
  By \ref{it:hashInv}, $X_\pi$ is defined with probability $1/2$ and uniform on a set of size $\tfrac 1 2 |H_\pi| \cdot |W_\pi| \ge \tfrac 1 4 |H_\pi| \cdot |A_\pi|$ when defined. 
  Similar for $Y_\pi$.  
  Hence, $(X_\pi, Y_\pi)$ is defined with probability $1/4$ and when defined, it is uniform on fraction $1/16$ of the elements in 
  $H_\pi \times A_\pi \times G_\pi \times B_\pi$. 
  Thus, $(X_\pi, Y_\pi)$ is $\tfrac {15} {16}$-close to uniform in this set, by \ref{it:uniformInSubset}. 

  Let $\widetilde X = X_\pi$ and $\widetilde Y = Y_\pi$ for $\pi$ randomly selected as in the lemma. 
  By \ref{it:mixtureClose}, $V = (\widetilde X, \widetilde Y)$ is $\tfrac {15}{16}$-close to uniform in the set
  \[
     \bigcup_\pi H_\pi \times A_\pi \times G_\pi \times B_\pi.
     \vspace{-2mm}
  \]
  The lemma follows by applying \ref{it:KolmCloseToUnifrom} with $\eps = 2^{-5}$ conditioned to $V$ being defined.  
\end{proof}

\section{Breaking key-agreement means solving $(\mcY^t_c,\mcN_e)$}\label{sec:easy-direction}

\begin{proposition}\label{prop:easyDirection}
  Let $\gamma > 6$ and $t(n) \le n^{O(1)}$. 
  If key-agreement does not exist, then the promise problem $(\mcY^t_{c},\mcN_{c+\gamma})$ is in~$\iobpp$. 
\end{proposition}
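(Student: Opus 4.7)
The plan is to assume that key-agreement does not exist and extract from this assumption a polynomial-time randomized decider for $(\mcY^t_c, \mcN_{c+\gamma})$ that is correct on infinitely many input lengths. The guiding intuition is double-sided: on the $\mcY^t_c$ side the pair $(\pi_0, x_0)$ admits short interactive witnesses $(a_0,b_0)$ with $|a_0 b_0| \le \C(\pi_0) + c\log n$ such that $U(a_0){\leftrightarrow}V(b_0)=(\pi_0,x_0,x_0)$ in time $t(n)$, so sampling Alice's and Bob's programs uniformly at the correct length produces this triple with probability at least $2^{-\C(\pi_0)-c\log n}$; on the $\mcN_{c+\gamma}$ side, symmetry of information gives $\C(x_0\mid\pi_0)\gtrsim\gamma\log n$, and hence no polynomial-time randomized machine can output $x_0$ on input $\pi_0$ with probability exceeding $n^{-\gamma+O(1)}$.

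Concretely, I would introduce, for each pair $(\ell_A,\ell_B)$, the protocol $\Pi_{\ell_A,\ell_B}$ in which Alice samples $a\in\{0,1\}^{\ell_A}$ uniformly, Bob samples $b\in\{0,1\}^{\ell_B}$ uniformly, they execute $U(a){\leftrightarrow}V(b)$ for $t(n)$ steps, and then exchange short hashes of their outputs to abort on disagreement. Each $\Pi_{\ell_A,\ell_B}$ is a polynomial-time protocol with agreement essentially one conditional on non-abort, so by the assumed non-existence of key-agreement and the reductions behind \Cref{cor:GLka}, it admits a polynomial-time eavesdropper $E_{\ell_A,\ell_B}$ that predicts the agreed key from the transcript with inverse-polynomial advantage on infinitely many lengths.

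The decider, on input $(\pi_0,x_0)$ of length $n$, would uniformly pick $\ell_A,\ell_B\le n$ (so the correct witness-lengths are hit with probability $\Omega(n^{-2})$), simulate $E_{\ell_A,\ell_B}(0^n,\pi_0)$, and output ``outside $\mcN$'' if and only if the simulation equals $x_0$. For inputs in $\mcN_{c+\gamma}$ the Kolmogorov-counting bound above controls the accept probability by $n^{-\gamma+O(1)}$, which is $o(1)$ since $\gamma>6$. For inputs in $\mcY^t_c$, when the guessed lengths match those of some witness $(a_0,b_0)$, the triple $(\pi_0,x_0,x_0)$ appears in $\Pi_{\ell_A,\ell_B}$ with probability at least $2^{-\C(\pi_0)-c\log n}$; combined with the eavesdropper's average advantage, and critically with the Goldreich--Levin list-decoder attached to $E_{\ell_A,\ell_B}$ via \Cref{cor:GLka}, this should yield $\Pr[E_{\ell_A,\ell_B}(\pi_0)=x_0]\gtrsim n^{-c-O(1)}$, comfortably above the $\mcN$ bound once $\gamma>c+O(1)$. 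Standard amplification then drives the two-sided error below $1/3$.

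The hard part is precisely this transference step: converting Eve's average-case advantage on $\Pi_{\ell_A,\ell_B}$ into a pointwise lower bound at the specific, possibly atypical input $(\pi_0,x_0)$. The safety margin $\gamma>6$ is there to absorb the logarithmic overheads that accumulate when the Kolmogorov counting is made rigorous --- symmetry of information, prefix-free encodings of $n$ and of the guessed lengths $\ell_A,\ell_B$, and a description of Eve's random coins. Invoking the Goldreich--Levin list-decoder explicitly, rather than the raw predictor $E_{\ell_A,\ell_B}$, is the natural way to make the transference clean: with non-negligible probability the short list it outputs on input $\pi_0$ contains $x_0$, and sampling a uniformly random element from this list gives the required pointwise inverse-polynomial success against $(\pi_0,x_0)$.
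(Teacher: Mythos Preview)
Your proposal has the right skeleton but two genuine gaps that prevent it from closing.

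\textbf{The transference step is not solved, and Goldreich--Levin does not help.} You correctly identify the hard part: turning Eve's average-case success on the Levin-search protocol into a pointwise guarantee on the specific input $(\pi_0,x_0)$. But the Goldreich--Levin list-decoder does not address this. GL converts a \emph{distinguisher} (on inner products) into a \emph{predictor}; it is still an average-case-to-average-case tool and says nothing about any particular atypical $(\pi_0,x_0)$. What the paper actually does is different: it first invokes Holenstein's amplification theorem (Theorem~\ref{th:Holenstein}) to upgrade Eve from ``inverse-polynomial advantage'' to success probability $1-n^{-\beta}$ with $\beta>c+4$. Then it runs a Kolmogorov counting argument: let $S_{n,\ell}$ be the set of pairs $(\pi,x)$ with $\CI^t(\pi,x)=\ell$ on which Eve fails with probability $>1/3$. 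Each such pair is produced by the Levin-search protocol with probability at least $\tfrac{1}{3}\cdot\tfrac{1}{(2n)^2}2^{-\ell}$, and since Eve's total failure probability is at most $n^{-\beta}$, one gets $|S_{n,\ell}|\le O(n^{2-\beta}2^\ell)$. Hence every $(\pi,x)\in S_{n,\ell}$ satisfies $\C(\pi,x)\le \ell+(4-\beta+o(1))\log n$, which forces $\CI^t(\pi,x)>\C(\pi)+c\log|\pi x|$ and places it outside~$\mcY^t_c$. This counting argument \emph{is} the transference, and it requires Eve's failure probability to be polynomially small --- an Eve with merely non-negligible success gives you nothing here.

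\textbf{The family of eavesdroppers breaks the $\iobpp$ argument.} By using a separate $E_{\ell_A,\ell_B}$ for each pair of lengths, each correct only on its own infinite set of input sizes, you have no guarantee that any single $n$ works for all $(\ell_A,\ell_B)$ simultaneously --- and you need this, since different inputs in $\mcY^t_c$ of the same size $n$ may require different witness lengths. The paper avoids this by folding the random choice of $\ell_A,\ell_B$ \emph{into} the protocol itself (Alice and Bob each sample a length in $[2n]$ first), yielding one protocol and one Eve.

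A smaller point: the transcript you feed to Eve must include the equality-check hash $h(x_0)$ and Bob's confirming bit, not just $\pi_0$; the paper makes this explicit. This is also where the $(c+5)\log n$ hash length enters the $\mcN$ analysis and is why the budget $\gamma>6$ is needed rather than the looser $\gamma>c+O(1)$ you wrote.
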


\begin{proof}
  It suffices to assume that a single specific protocol is not a key-agreement one. 
  The proof assumes that Eve can break this protocol, not just with non-negligible probability, but with probability close to~$1$. 
  This assumption is without loss of generality because of the following. 

  \begin{theorem}\label{th:Holenstein}
    Let $\alpha > \beta$. If key-agreement does not exist, then each protocol with agreement $1 - \tfrac 1 {n^{\alpha}}$ is not $(1 - \tfrac 1 {n^{\beta}})$-secure. 
  \end{theorem}

  \noindent
  A $1$-bit version of this theorem was proven in~\cite{Holenstein2005stoc,Holenstein2006phd}, 
  and the general version follows from a weak variant of the Goldreich--Levin theorem, see~\cite[appendix B.2]{blmp23}.

  Let $T(n)$ be a polynomial upper bound of $t(n)$. 
  Consider the following protocol. 

  \begin{algorithm}
    {\em Levin-search protocol with parameter~$c$.} Input: an integer $n$. 
      \begin{itemize}[leftmargin=2em,label=--]
      \item 
	Alice randomly samples $\ell_A \in [2n]$ and $a \in \{0,1\}^{\ell_A}$.
	\newline Bob samples $\ell_B \in [2n]$ and $b \in \{0,1\}^{\ell_B}$. 

      \item 
	They view $a$ and $b$ as programs and evaluate their interaction $(\pi,x,y) = U(a) {\leftrightarrow} U(b)$. 
	If one of the program's runtime exceeds $T(n)$, it always replies with $0$ (the value does not matter), until it has replied $T(n)$ times. 

      \item 
	Alice and Bob check whether $x=y$ using hashes of size $\lceil (c+5)\log n\rceil$ that fail with probability at most $n^{-c-5}$. 
	Thus, Alice sends a hash function $h$ and value~$h(x)$, then Bob replies whether this equals $h(y)$ by sending 1 or~0. 

      \item 
	If equal, Alice outputs $x$ and Bob $y$. Otherwise, they output the empty string. 
    \end{itemize}
  \end{algorithm}

  \medskip
  \noindent
  Note that this protocol has agreement $1-n^{-c-5}$, 
  because if $x \neq y$, then both players output the empty string, unless there is a hash collision. 
  Also, observe that the above protocol runs in time $O(T(n))$, which is polynomial by definition of~$T$. 

  If key-agreement does not exist, then Holenstein's theorem implies the existence of an adversary $E$ which guesses the output with high probability. 
  We prove that the following is an $\iobpp$~algorithm.

  \begin{algorithm}
    {\em Algorithm for solving $(\mcY^t_c, \mcN_{c+\gamma})$ on input $(\pi, x)$.}  
  \begin{itemize}[leftmargin=*, label=--]
    \item 
      Let $n = |(\pi,x)|$. 

      Let $h$ be a random hash function with $|h(x)| = (c+5)\log n$. 

      Let $\tilde \pi$ be the transcript obtained from $\pi$ after which Alice sends $h, h(x)$ and Bob replies with $1$. 

    \item 
      If $E(n,\tilde \pi) = x$, output ``outside $\mcN$''. Otherwise, output ``outside $\mcY$''. 
  \end{itemize}
  \end{algorithm}
  
  \noindent
  Let us prove correctness for all $(\pi, x)$ in $\mcN_{c+\gamma}$, thus if $\C(\pi, x) \ge \C(\pi) + (c+\gamma) \log |\pi x|$. 
  Note that $|\pi x| \le n \le 2|\pi x|$. 
  For large~$|\pi|$, symmetry of information implies 
  \[
    \C(x | \pi) \ge \C(\pi, x) - \C(\pi) - (1+o(1))\log |\pi x| \ge (c+\gamma-1-o(1)) \log n, 
  \]
  Thus with probability~$5/6$, 
  \[
    \C(x | \tilde \pi) \ge (\gamma-1-5-o(1)) \log n, 
  \]
  because adding a random hash function $h$ to the condition does not change Kolmogorov complexity, 
  and adding the hash value $h(x)$, decreases the complexity by at most $|h(x)| = (c+5)\log n + O(1)$.\footnote{
    With a direct argument that does not use symmetry of information, 
    it is enough to require that $(c + \gamma - o(1))\log n > |h(x)|$.
  }
  Recall that $\gamma > 6$. 
  Thus no probabilistic program can guess $x$ with probability $4/5$ if it only knows $\tilde \pi$, (formally, one could apply the coding theorem). 
  Hence, with probability at least $2/3$, the guess of $E$ is wrong, i.e. $E(n, \tilde\pi) \neq x$, and the correct answer ``outside $\mcY$'' is given. 

  \bigskip
  \noindent
  Now we prove correctness for infinitely many $n$ and inputs $(\pi, x) \in \mcY^t_{c}$ with $|(\pi,x)|=n$. 
  Recall that the Levin-search protocol has agreement $1-n^{-c-5}$. 
  Thus by~\Cref{th:Holenstein}, for every $\beta < c+5$ there exists an Eve $E$ and infinitely many~$n$ such that Eve guesses the key with probability $1-n^{-\beta}$. 
  We will later see what the right value of $\beta$ should be (it will be any constant $> c+4$). 
  Fix a large such~$n$. 

  To show correctness, we prove the contrapositive of the requirement for $(\pi,x) \in \mcY^t_c$ with $|(\pi,x)|=n$: 
  {\em if the answer is ``outside $\mcY$'' with probability $> 1/3$, then $(\pi,x)$ is outside~$\mcY^t_c$.} 

  Consider sets $S_{n,\ell}$ of pairs for which Eve guesses the key wrong with probability more than~$1/3$: 
  \[
    (\pi, x) \in S_{n,\ell} \quad \Longleftrightarrow \quad \CI^t(\pi, x) = \ell, \quad |(\pi,x)|=n  \quad \textnormal{and} \quad \Pr[E(n, \tilde \pi) \mathop {\not=} x\mid \pi,x] > \tfrac 1 3. 
  \]
  Note that this set is computable on input~$\ell$ and $n$. 
  We need to check that for infinitely many $n$ and for all $\ell$, that the set $S_{n,\ell}$ is disjoint from $\mcY^t(\pi,x)$. 

  For large $n$, we only need to consider $\ell \le 2n$ because $(\pi,x) \in \mcY^t_c$ implies 
  \[
    \CI^t(\pi,x) \le \C(\pi) + c\log |\pi x| \le 2|(\pi,x)|. 
  \]

  What is the probability that on input $n$, the Levin-search protocol produces a fixed $(\pi, x) \in S_{n,\ell}$ satisfying $(\pi,x,x) = U(a) {\leftrightarrow} U(b)$?
  By definition of interactive complexity, this is at least $\tfrac 1 {(2n)^2} 2^{-\ell}$: 
  Alice's and Bob's program lengths $\ell_A$ and $\ell_B = \ell-\ell_A$ are each guessed with probability $1/(2n)$, 
  and then the programs $a,b$ in the definition of $\CI$ are guessed with probability $2^{-\ell}$. 

  Thus each pair in $S_{n,\ell}$ contributes at least $\tfrac 1 3 \cdot \tfrac 1 {(2n)^2} 2^{-\ell}$ to the failure probability of Eve. 
  Since the total failure probability is at most $n^{-\beta}$, we conclude that
  \[
    |S_{n,\ell}| \le O(n^{2-\beta} \cdot 2^\ell).
  \]
  Since $S_{n,\ell}$ is computable, each $(\pi,x) \in S_{n,\ell}$ with $\ell \le 2n$ satisfies up to $o(\log n)$ terms,
  \begin{align*}
    \C(\pi,x) & \le \log |S_{n,\ell}| + 2\log n\\
          & \le \ell + (2-\beta)\log n + 2\log n \\
	  & = \CI^t(\pi,x) + (4-\beta)\log n, 
  \end{align*}
  where the last equality holds by definition of $S_{n,\ell}$. 
  Thus, 
  \[
    \CI^t(\pi,x) \ge \C(\pi) + (\beta - 4 + o(1))\log |\pi x|. 
  \]
  Hence, for large $n$, the set $\mcY^t_c$ is disjoint from $S_{n,\ell}$, provided we choose $\beta > c + 4$, and this is enough for Holenstein's theorem with $\alpha = c+5$. 

  We proved that the answers are correct for infinitely many input sizes. 
  Hence, the promise problem is in $\iobpp$ and the proposition is proven.\footnote{
    Proposition~\ref{prop:easyDirection} holds for all $\gamma > 3$.  
    This follows from the next improvements. 
    Firstly, we have $\C(\pi,x) \le \CI^t(\pi,x) + (3-\beta + o(1))\log n$, because the parameter $\ell$ can be extracted from the index length of $S_{n,\ell}$ with $O(\log \log n)$ information. 
    This means that, we may choose $\alpha$ and $\beta$ to be slightly above $3$. 
    In the Levin-search protocol, we use a hash of size $(3+c+\eps)\log |\pi x|$, with $\eps$ arbitrarily close to $0$. 
    As explained in a previous footnote, we may choose $\gamma$ slightly above $(3+\eps)$, thus also arbitrarily close to~$3$. 
  }$^,$\footnote{
    To prove that the problem is in interval-$\iobpp$, we call $E(n',\tilde \pi)$ 
    with increasing values of $n'$ starting from $n = |(\pi,x)|$ up to $n \log n$. 
    This affects the bound on $|S_{n',\ell}|$ by a factor $\log^2 n$ and the bound $\C(\pi,x) \le \log |S_{n', \ell}| + \log n'$ 
    by an additive term $O(\log \log n)$. Both shifts can be commpensated by arbitrarily small shifts of~$\gamma$. 
  }
\end{proof}

\end{document}